\newtheorem{lem}{Lemma}
\newtheorem{prop}{Proposition}
\begin{document}

\title{Joint MOO of Transmit Precoding and Receiver Design in a  Downlink Time Switching MISO SWIPT System }

\author{Nafiseh Janatian, 
        Ivan~Stupia 
        and~Luc~Vandendorpe
}


\maketitle

\begin{abstract}
In this paper, we consider a time-switching (TS) co-located simultaneous wireless information and power transfer (SWIPT) system consisting of multiple multi-antenna access points which serve multiple single antenna users. In this scenario, we propose a multi-objective optimization (MOO) framework to design jointly the Pareto optimal beamforming vector and the TS ratio for each receiver. The objective is to maximize the utility vector including the achieved data rates and the harvested energies of all users simultaneously. This problem is a non-convex rank-constrained MOO problem which is relaxed and transformed into a non-convex semidefinite program (SDP) based on  the weighted Chebycheff method. The majorization-minimization algorithm is utilized  to solve the nonconvex SDP and the optimal solution is proved to satisfy the rank constraint. We also study the problem of optimizing the beamforming vectors in a fixed TS ratio scenario with the same approach. Numerical results are provided for \textcolor{black}{two coordinated access points with MISO configuration}. The results illustrate the trade-off between harvested energy and information data rate objectives and show the effect of optimizing the precoding strategy and TS ratio on this trade-off. 
\end{abstract}

\begin{IEEEkeywords}
Simultaneous wireless information and power transfer (SWIPT), Time Switching (TS), Multi-objective optimization (MOO).
\end{IEEEkeywords}

%
\IEEEpeerreviewmaketitle

\section{Introduction}
With the increasing concern about the \textcolor{black}{integration of energy constrained ultra low power devices into the future wireless ecosystem}, a growing attention has been recently  devoted to the concept of RF energy harvesting.  The idea of using the same electromagnetic field for transferring both information and power to the wireless devices, called simultaneous wireless information and power transfer (SWIPT), is one of the most appealing techniques in this context. 
SWIPT is a promising solution to increase the lifetime of wireless nodes 
and hence alleviate the energy bottleneck of energy constrained wireless networks.
It is predicted that SWIPT will become an indispensable building block for many commercial and industry wireless systems in the future, including the upcoming 
internet of things (IoT) systems, wireless sensor networks and small-cell networks \cite{bi}. The ideal SWIPT receiver architecture assumes that energy can be extracted from the same signal as that used for information decoding \cite{varsh}. However, the current circuit designs are not yet able to implement this extraction, since the energy carried by the RF signal is lost during the information decoding process.
As a result,  a considerable effort has been devoted to the study of different practical SWIPT receiver architectures, namely, the parallel receiver architecture and the co-located 
 receiver architecture \cite{zhang}. A parallel receiver architecture equips the energy harvester and the information receiver with independent antennas for energy harvesting (EH) and information decoding (ID). In a co-located receiver architecture, the energy harvester and the information receiver share the same antennas. Two common methods to design such kind of receivers are time-switching (TS) and power-splitting (PS). In TS, the receiver switches in time between EH and ID, while in PS the receiver splits the received signal into two streams of different power values for EH and ID.

SWIPT has to be realized by properly allocating the available resources and sharing them among both information transfer and energy transfer. Designing TS/PS SWIPT receivers in a point-to-point wireless  environment to achieve various trade-offs between wireless information transfer and energy harvesting is considered in \cite{liu1,liu2,zho}. In multi-user environments, researches on SWIPT focus on the power and subcarrier allocation among different users such that some criteria (throughput, harvested power, fairness, etc.) are met. 
For the multi-user downlink channel, various policies have been proposed for single input-single output (SISO) and multi input-single output (MISO) configurations \cite{ng},\cite{shi}.
Resource allocation algorithm design aiming at the maximization of data transmission energy efficiency in a SISO PS SWIPT multi-user system is considered in \cite{ng}  with an orthogonal frequency division multiple access (OFDMA). A MISO configuration offers the additional degree of freedom of beamforming vector optimization at the transmitter.
In \cite{shi}, a joint beamforming and PS ratio allocation scheme was designed to minimize the power cost under the constraints of throughput and harvested energy. 
The problem of joint power control and time switching  in MISO SWIPT systems by considering the long-term power consumption and heterogeneous quality of service (QoS) requirements for different types of traffics is also studied in \cite{dong}. 
The energy efficient beamforming design in MISO heterogeneous cellular networks including separate EH and ID receivers is addressed in \cite{Sheng}. Beamformers are designed with two aims of maximizing the information transmission efficiency of ID users and energy harvesting efficiency  of EH users while taking the minimal rate requirement and the minimal harvested power threshold for ID users and EH users into account. 

SWIPT in a MIMO interference channel with two transmitter-receiver pairs is studied in \cite{park}. When both receivers are set in ID and EH mode, beamforming vectors are found to maximize the achievable sum rate and the harvested energy, respectively. Also for the mixed case of one ID receiver and one EH receiver, transmit strategies are proposed in order to maximize the energy transfer to the EH receiver and minimize the interference to the ID receiver. 
PS SWIPT in a multi-user MIMO interference channel scenario is also studied in \cite{zong}. The objective is to minimize the total transmit power of all transmitters by jointly designing the transmit beamformers, power splitters, and receiver filters, subject to the signal-to-interference-plus-noise ratio (SINR) constraint for ID and the harvested power constraint for EH at each receiver.

All the above works consider single cell cases with one base station (BS) and single or multiple mobile users. In a multi-cell case the system becomes interference limited. However, while interference links are harmful for information decoding, they are useful for energy harvesting. 
The SWIPT beamforming design for multiple cells with coordinated multipoint approach (CoMP) is addressed in \cite{ng2}. The objective is the minimization of the total network transmit power and the maximum capacity consumption for the backhaul link under constraints on the SINRs and the values of harvested energy. Sparse beamforming for real-time energy trading in CoMP-SWIPT networks is also studied in \cite{Nur}.

As we have seen so far in the literature overview of SWIPT, these works have considered single objective optimization (SOO) framework to formulate the problem of resource allocation or beamforming optimization. Popular objectives are classical performance metrics such as (weighted) sum rate/ throughput (to be maximized), or transmit power (to be minimized), or sum of energy harvested (to be maximized). In SOO one of these objectives is selected as the sole objective while the others are considered as constraints. This approach assumes that one of the objectives is of dominating importance and also it requires prior knowledge about the accepted values of the constraints related to the other objectives.  However, the multi-objective optimization (MOO)  investigates the optimization of the vector of objectives, for nontrivial situations where there is a conflict between objectives. This approach has been proposed lately for wireless information systems   \cite{bjornson} and is  considered for a parallel SWIPT system in \cite{moo, moo2} very recently. The considered system  consists of a multi-antenna transmitter, a single-antenna information receiver, and multiple energy harvesting receivers equipped with multiple antennas. In this scenario, the trade-off between the maximization of the energy efficiency of information transmission and the maximization of the wireless power transfer efficiency is studied by means of resource allocation using an MOO framework.

The fundamental approach used in this paper is also MOO. We address the joint MOO of transmit precoding and receiver design in a TS MISO SWIPT system and our main contributions are summarized as follows:

\begin{itemize}[leftmargin=*]
\item 
Different from previous works, we consider a MISO SWIPT system consisting of multiple multi-antenna access points (APs) which serve multiple single antenna user equipments (UEs). Therefore, the problem formulation and the proposed algorithm in this work can be simply applied to \textcolor{black}{a scenario in which distributed APs cooperate phase coherently via a X-haul network to simultaneously serve heterogeneous UEs. Heterogeneity includes different types of receivers like pure data receivers (e.g. smartphones and laptops), energy harvesters and wireless sensors that are capable of both harvesting the energy and decoding the information}. As a result, the term UE in this paper refers to a broader range of devices encompassing the ones directly used by the end-users and the autonomous sensors.
 
\item
We have assumed the TS SWIPT technique, which is practically feasible and can be implemented using simple switches, while PS receivers require highly complex hardware due to different power sensitivity values of ID and EH parts in each receiver. 
\textcolor{black}{In this perspective, it is worth mentioning that TS SWIPT receivers can be considered as a special case of dynamic PS SWIPT receivers with on-off power splitting factor. Hence, since realistic values of the ID and EH receivers sensitivity may differ by more than $30$dB, TS and dynamic PS SWIPT will have similar performance in practical scenarios.} 

\item 
We formulate the problem to design the optimal transmit precoding matrix and the time switching ratio of each receiver jointly to maximize the utility vector including the achieved information data rates and harvested energies of all users simultaneously using MOO framework. Since an MOO problem cannot be solved in a globally optimal way, the Pareto optimality   \cite{pareto} of the resource allocation will be adopted as optimality criterion. 
This problem is a non-convex rank-constrained MOO problem. First we relax the rank constraint and transform the problem into a non-convex SOO semidefinite program (SDP) and after that we show that the optimal solution satisfies the rank constraint. In this framework we utilize the majorization-minimization algorithm \cite{MM} to solve the nonconvex SOO SDP. Numerical results illustrate the trade-off between energy harvested and information data rate objectives  and show the effect of optimizing the precoding strategy and TS ratio on this trade-off. 
\end{itemize}

The rest of this paper is organized as follows. Section II describes the system model and problem formulation. Joint multi-objective design of spatial precoding and receiver time switching is studied in section III. In Section IV, we present numerical results and finally the paper is concluded in Section V.
\section{System Model and Problem Formulation}
We consider a multi-user MISO downlink system for SWIPT over flat fading channels as shown in Fig. \ref{fig1}. The system consists of $N_{AP}$ APs which are equipped with $N_{A_j}, j=1,..,N_{AP}$ antennas  and serve $N_{UE}$ single antenna UEs.  
The set of all UEs and all APs are denoted by $\mathcal{N}_{UE}$ and $\mathcal{N}_{AP}$, respectively.
Each user is assumed to be served by multiple transmitters but the information symbols will be coded and emitted independently. Therefore, the received signal in the $i$th UE can be modelled as:
\begin{equation} \label{M}
  y_i=\sum_{j=1}^{N_{AP}}\boldsymbol{h_{ij}^H}{\sum_{l=1}^{N_{UE}}\boldsymbol{x_{lj}}}s_{l}+ n_i,
\end{equation}
where $i,l\in{\mathcal{N}_{UE}}, j\in{\mathcal{N}_{AP}}$, $s_{l}$ is the information symbol from the APs to the $l$th UE which originates from independent Gaussian codebooks, ${s_{l}}\sim{\mathcal{CN}(\boldsymbol{0},1)}$ and  $\boldsymbol{x_{lj}}\in\mathbb{C}^{N_{A_j}\times1}$ is the beamforming vector. We assume quasi-static flat fading channels for all UEs and denote by  $\boldsymbol{h_{ij}}\in\mathbb{C}^{N_{A_j}\times1}$ the complex channel vector from the $j$th AP to the $i$th UE. Also $n_i\sim{\mathcal{CN}(0,\sigma_i^2)}$ is the circularly symmetric complex Gaussian receiver noise which includes  the antenna noise 
and the ID processing noise 
in the $i$th user.
\begin{figure}[!t]
\centering
\includegraphics[width=2.9in]{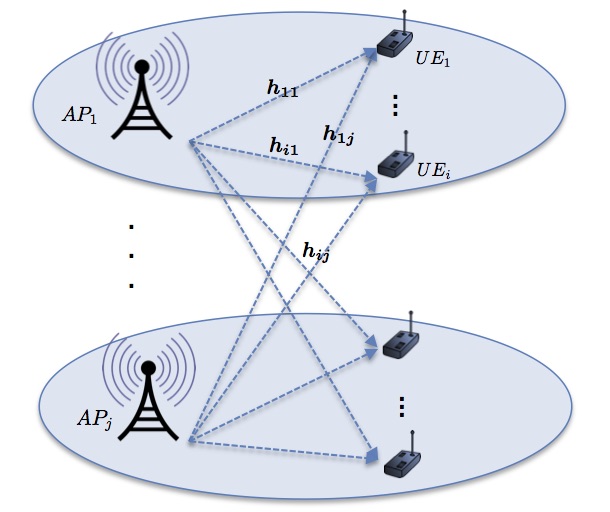}
 \caption{Multi-AP TS MISO SWIPT system}
\label{fig1}
\end{figure}
 According to \eqref{M},  the achievable data rate $R_i$ (bits/sec/Hz) for the $i$th UE can be found from the following equation:
 \begin{equation} \label{R}
  \begin{aligned}
R_i=\log_2(1+\frac{\sum_{j=1}^{N_{AP}}{\text{trace}(\boldsymbol{H_{ij}}\boldsymbol{X_{ij}})}}{\sigma_i^2+\sum_{j=1}^{N_{AP}}\sum_{l=1,l\neq{i}}^{N_{UE}}{{\text{trace}(\boldsymbol{H_{ij}}\boldsymbol{X_{lj}})}}}),
  \end{aligned}
\end{equation}
where $\boldsymbol{X_{ij}}=\boldsymbol{x_{ij}}\boldsymbol{x_{ij}^H}$,
 $\boldsymbol{H_{ij}}=\boldsymbol{h_{ij}}\boldsymbol{h_{ij}^H}$ and therefore $\boldsymbol{X_{ij}},\boldsymbol{H_{ij}}\in\mathbb{C}^{N_{A_j}\times{N_{A_j}}}$ are rank-one matrices for $i\in{\mathcal{N}_{UE}}, j\in{\mathcal{N}_{AP}}$. 
Also the received energy $E_i$ (assuming normalized energy unit of Joule/sec) in the $i$th UE is given by:
 \begin{equation} \label{E}
  \begin{aligned}
E_i=\sum_{j=1}^{N_{AP}}\sum_{l=1}^{N_{UE}}\text{trace}(\boldsymbol{H_{ij}}\boldsymbol{X_{lj}}),
  \end{aligned}
\end{equation}
in which the antenna noise power is neglected.

 The UEs are assumed to be capable of information decoding and energy harvesting using TS receiver architecture. In TS scheme, each reception time frame is divided into two orthogonal time slots, one for ID and the other for EH.  Therefore we have the following equations for the average data rate and the harvested energy at the $i$th UE:
\begin{equation} \label{R1}
  \begin{aligned}
R_i^{TS}(\boldsymbol{X},\alpha_i)={\alpha_i}R_i(\boldsymbol{X}),
  \end{aligned}
\end{equation}
\begin{equation} \label{E1}
  \begin{aligned}
E_{h_i}^{TS}(\boldsymbol{X},\alpha_i)=({1-\alpha_i})\eta_iE_i(\boldsymbol{X}),
  \end{aligned}
\end{equation}
where  $R_i(\boldsymbol{X})$ and $E_i(\boldsymbol{X})$ can be found from \eqref{R} and  \eqref{E}, respectively, $\alpha_i$ is the fraction of time devoted to ID in the $i$th UE and $\eta_i$ denotes the energy harvesting efficiency factor of the $i$th UE.

Our goal is to find the optimal transmit strategies $\boldsymbol{X}=[\boldsymbol{X_{lj}}]_{l\in{\mathcal{N}_{UE}}, j\in{\mathcal{N}_{AP}}}$
 and time switching ratios  $\boldsymbol{\alpha}=[\alpha_i]_{i\in{\mathcal{N}_{UE}}}$, to maximize the performance of all users simultaneously. Since the information data rate and harvested energy are both desirable for each user, we define the utility vector of the $i$th UE by $\boldsymbol{u}_i(\boldsymbol{X},\alpha_i)=[R_i^{TS}(\boldsymbol{X},\alpha_i), E_{h_i}^{TS}(\boldsymbol{X},\alpha_i)]$. Our optimization objective is to maximize the utility vector of the system defined by $\boldsymbol{u}(\boldsymbol{X},\boldsymbol{\alpha})=[\boldsymbol{u}_1(\boldsymbol{X},\alpha_1), \boldsymbol{u}_2(\boldsymbol{X},\alpha_2), ...,
 \boldsymbol{u}_{N_{UE}}(\boldsymbol{X},\alpha_{N_{UE}})]$ jointly via the multi-objective problem formulation. This problem can be written as:
 \begin{equation} \label{P1} 
  \begin{aligned}
    &\underset{\boldsymbol{X}, \boldsymbol{\alpha}}{\text{Maximize}}
& &\boldsymbol{u}(\boldsymbol{X},\boldsymbol{\alpha})\\
 & \text{subject to} 
 & & \mbox{(1)}  \    \sum_{j=1}^{N_{AP}}\sum_{l=1}^{N_{UE}}{\text{trace}({\boldsymbol{X_{lj}}})}\leq{P_{max}}\\
 &
 & & \mbox{(2)}  \   \boldsymbol{X_{lj}}\succeq{0}, \text{Rank}(\boldsymbol{X_{lj}})=1,  \ \forall{l,j}\\
 &
& & \mbox{(3)}  \     \alpha_i\in[0,1], \ \forall{i},
  \end{aligned}
\end{equation}
where the first constraint denotes the average power constraint for APs across all transmitting antennas.
 
The design problem for the ideal SWIPT in which energy is assumed to be extracted simultaneously while information decoding is the same as problem \eqref{P1} but with utility vectors of $\boldsymbol{u}_i(\boldsymbol{X})=[R_i(\boldsymbol{X}), \eta_iE_i(\boldsymbol{X})]$, where  $R_i(\boldsymbol{X})$ and $E_i(\boldsymbol{X})$ can be found from \eqref{R} and  \eqref{E}, respectively. As mentioned earlier, this ideal receiver is not feasible in practice, however,  for theoretical benchmarking, its performance can be used as an upper bound for the performance of TS SWIPT.
\section{Joint transmit precoding and receiver design}
In this section, we study problem \eqref{P1} and propose an algorithm to find the Pareto optimal transmit precoding matrices $\boldsymbol{X}$
 and TS ratios  $\boldsymbol{\alpha}$. We also propose an algorithm for solving the problem in case of fixed switching rates in UEs. 

\subsection{TS SWIPT with adaptive switching rates}
To solve the problem, we first relax the rank constraint and later we show that the optimal solution of the relaxed problem satisfies $\text{Rank}(\boldsymbol{X_{lj}})=1, \forall{l}\in{\mathcal{N}_{UE}}, \forall{j}\in{\mathcal{N}_{AP}}$. To solve the relaxed MOO problem, we use the weighted Chebyshev method which provides complete Pareto optimal set by varying predefined preference parameters $v_i^{(1)},v_i^{(2)}, \ \forall{i}\in{\mathcal{N}_{UE}}$.  This scalarization is equivalent to the following problem:
\\
\begin{equation} \label{P2}
  \begin{aligned}
    &\underset{\boldsymbol{X},\boldsymbol{\alpha},\lambda}{\text{Maximize}}
& & \lambda\\
 & \text{subject to} 
  & & \mbox{(1)}  \    {\alpha_i}R_i(\boldsymbol{X})\geq{\lambda{v_i^{(1)}}}, \ \forall{i}  \\
&
  & & \mbox{(2)}  \     {(1-\alpha_i)}\eta_iE_i(\boldsymbol{X})\geq{\lambda{v_i^{(2)}}}, \ \forall{i}\\
&
 & & \mbox{(3)}  \    \sum_{j=1}^{N_{AP}}\sum_{l=1}^{N_{UE}}{\text{trace}({\boldsymbol{X_{lj}}})}\leq{P_{max}}\\
&
& & \mbox{(4)}  \   \boldsymbol{X_{lj}}\succeq{0}, \ \forall{l,j}\\
&
 & & \mbox{(5)}  \   \alpha_i\in[0,1],  \ \forall{i}.
  \end{aligned}
\end{equation}
This problem is a non-convex SDP due to not only the coupled TS ratios and $R_i$, $E_i$ in the first and second constraints but also the definition of $R_i(\boldsymbol{X})$ as presented in \eqref{R}. Introducing the new variables $R_i,E_i,I_i$ and $\beta_i$, problem \eqref{P2} can be represented as:
\begin{equation} \label{P3} 
  \begin{aligned}
    &\underset{\boldsymbol{X},\alpha_i, \beta_i,R_i,E_i\atop{I_i ,\lambda, \  \forall{i}}}{\text{Maximize}}
& & {\lambda}\\
 & \text{\ \ \ \ \ subject to} 
  & & \mbox{(C1)}  \    {\alpha_i}R_i\geq{\lambda{v_i^{(1)}}}, \ \forall{i}  \\
&
  & & \mbox{(C2)}  \     {\beta_i}\eta_iE_i\geq{\lambda{v_i^{(2)}}}, \ \forall{i}\\
&
  & & \mbox{(C3)}  \     E_i=\sum_{j=1}^{N_{AP}}\sum_{l=1}^{N_{UE}}\text{trace}(\boldsymbol{H_{ij}}\boldsymbol{X_{lj}}),  \ \forall{i}\\
  &
  & & \mbox{(C4)}  \     I_i=\sum_{j=1}^{N_{AP}}\sum_{l=1,l\neq{i}}^{N_{UE}}\text{trace}(\boldsymbol{H_{ij}}\boldsymbol{X_{lj}}),  \ \forall{i}\\
&
 & & \mbox{(C5)}  \     R_i={\log(E_i+{\sigma_i^2})-\log(I_i+{\sigma_i^2})}, \ \forall{i}\\
&
 & & \mbox{(C6)}  \    \sum_{j=1}^{N_{AP}}\sum_{l=1}^{N_{UE}}{\text{trace}({\boldsymbol{X_{lj}}})}\leq{P_{max}} \\
&
& & \mbox{(C7)}  \   \boldsymbol{X_{lj}}\succeq{0}, \ \forall{l,j}\\
&
 & & \mbox{(C8)}  \   \alpha_i+\beta_i=1, \ \forall{i}\\
 &
 & & \mbox{(C9)}  \   \alpha_i\in[0,1], 
   \end{aligned}
\end{equation}
where (C5) is directly obtained from substituting the definition of $E_i$ and $I_i$ in the definition of $R_i$ given by equation \eqref{R}.  
\begin{lem} \label{lem1}
 The constraint (C5) in problem \eqref{P3} can be relaxed to $(\overline{\mbox{C5}})$ defined below:
\begin{equation} \label{}
  \begin{aligned}
(\overline{\mbox{C5}})  \   R_i\leq{\log(E_i+{\sigma_i^2})-\log(I_i+{\sigma_i^2})}.
 \end{aligned}
\end{equation}
\end{lem}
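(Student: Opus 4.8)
The plan is to show that problem \eqref{P3} with the equality constraint (C5) and its relaxation in which (C5) is replaced by the inequality $(\overline{\mbox{C5}})$ share the same optimal value and the same optimal $(\boldsymbol{X},\boldsymbol{\alpha})$, so that nothing is lost by relaxing. The key observation is that the auxiliary variable $R_i$ enters problem \eqref{P3} only through constraint (C1), namely $\alpha_i R_i \geq \lambda v_i^{(1)}$, where it appears with the nonnegative coefficient $\alpha_i \geq 0$. Consequently, with all other variables held fixed, increasing $R_i$ can only make (C1) easier to satisfy and never affects any other constraint nor the objective $\lambda$ directly.

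First I would establish the easy direction between the two optimal values. Any feasible point of the original problem satisfies (C5) with equality, which in particular implies $(\overline{\mbox{C5}})$; hence every feasible point of \eqref{P3} is feasible for the relaxed problem, and the optimal value of the relaxed problem is at least that of \eqref{P3}.

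Next I would prove the reverse inequality by a pointwise correction argument. Let $(\boldsymbol{X}, \alpha_i, \beta_i, R_i, E_i, I_i, \lambda)$ be any optimal solution of the relaxed problem, and set $R_i^\star = \log(E_i+\sigma_i^2) - \log(I_i+\sigma_i^2)$ for each $i$. By $(\overline{\mbox{C5}})$ we have $R_i \leq R_i^\star$, and since $\alpha_i \geq 0$ it follows that $\alpha_i R_i^\star \geq \alpha_i R_i \geq \lambda v_i^{(1)}$, so (C1) still holds after replacing $R_i$ by $R_i^\star$. The remaining constraints (C2)--(C4) and (C6)--(C9) do not involve $R_i$ and are therefore unaffected, while $\lambda$ is unchanged. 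The corrected point thus satisfies (C5) with equality, is feasible for \eqref{P3}, and attains the same $\lambda$, which shows that the optimal value of \eqref{P3} is at least that of the relaxed problem.

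Combining the two inequalities yields equality of the optimal values, and the correction step shows that any solution of the relaxed problem yields a solution of \eqref{P3} with the same objective and the same $(\boldsymbol{X},\boldsymbol{\alpha})$, so the relaxation is tight. I do not expect a genuine obstacle here; the only point requiring care is the sign condition, namely verifying that $R_i$ appears monotonically (through $\alpha_i \geq 0$) in the single constraint (C1) and nowhere else, so that pushing $R_i$ up to its upper envelope $R_i^\star$ preserves feasibility without degrading the objective.
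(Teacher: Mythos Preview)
Your argument is correct and is in fact considerably more direct than the paper's. The paper proves Lemma~\ref{lem1} by writing out the Lagrangian of problem~\eqref{P3} and analysing the KKT system: it shows, via a fairly lengthy contradiction argument involving the dual matrices $\boldsymbol{\Gamma_{lj}}$, that the multiplier $\gamma_i^1$ associated with (C1) is strictly positive for every $i$, hence from the stationarity condition $\gamma_i^3=\gamma_i^1\alpha_i$ the multiplier of (C5) is positive, which justifies replacing the equality by the inequality $(\overline{\mbox{C5}})$. Your approach avoids optimality conditions entirely: you simply observe that $R_i$ appears only in (C1), monotonically through the factor $\alpha_i\geq 0$, so any feasible point of the relaxed problem can be lifted to a feasible point of the original problem with the same $\lambda$ by pushing $R_i$ up to its upper bound. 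This is more elementary, does not rely on constraint qualifications or on KKT conditions being necessary for a non-convex problem, and establishes exactly the same conclusion. The one thing the paper's route buys is that the intermediate fact $\gamma_i^1>0$ (and hence $\gamma_i^5<0$) is reused verbatim in the proof of Proposition~\ref{prop1}; your shortcut proves the lemma on its own but does not provide that dual information for later use.
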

\begin{proof}
See Appendix. A for the proof.
\end{proof}
Considering the result of lemma \ref{lem1}, we study the rank of the optimal precoding matrix by the following proposition: 
\begin{prop} \label{prop1}
The optimal precoding matrices of problem \eqref{P3} are rank-one matrices.
\end{prop}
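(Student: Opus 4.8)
The plan is to prove that the SDP relaxation (after applying Lemma 1 to replace C5 by the inequality $\overline{\text{C5}}$) admits an optimal solution with each $\boldsymbol{X_{lj}}$ of rank one, via the standard KKT-based argument for rank-constrained SWIPT SDPs. The strategy is to write down the Lagrangian of the relaxed problem, extract the stationarity (KKT) condition with respect to each matrix variable $\boldsymbol{X_{lj}}$, and then show that the structure of the resulting dual matrix forces the primal optimizer to have rank at most one. Since the rank-one constraint was dropped to obtain the relaxation, proving the optimizer is automatically rank one certifies that the relaxation is tight and that no solution quality is lost.

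Let me set up the key steps. First I would form the Lagrangian by attaching dual variables to the relevant constraints: multipliers (say $a_i \ge 0$) to C1, ($b_i \ge 0$) to C2, multipliers to the linear equalities C3 and C4 that define $E_i$ and $I_i$, a multiplier ($\mu \ge 0$) to the power constraint C6, and crucially a positive semidefinite matrix dual variable $\boldsymbol{Y_{lj}} \succeq 0$ to the constraint $\boldsymbol{X_{lj}} \succeq 0$ in C7. The rate constraint $\overline{\text{C5}}$ couples $R_i$, $E_i$, $I_i$ through logarithms, so I would assign it a multiplier as well and track how it feeds into the coefficients of $E_i$ and $I_i$. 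Next I would write the stationarity condition $\partial \mathcal{L}/\partial \boldsymbol{X_{lj}} = 0$. Because every occurrence of $\boldsymbol{X_{lj}}$ in the objective and constraints is through the linear forms $\text{trace}(\boldsymbol{H_{ij}}\boldsymbol{X_{lj}})$ (in C3, C4) and $\text{trace}(\boldsymbol{X_{lj}})$ (in C6), the derivative produces an equation of the form
\begin{equation}
\boldsymbol{Y_{lj}} = \mu \boldsymbol{I} - \sum_{i} c_{ilj}\,\boldsymbol{H_{ij}},
\end{equation}
where the $c_{ilj}$ are scalar combinations of the dual variables of C3/C4 (hence of the log-constraint multipliers), and $\boldsymbol{H_{ij}} = \boldsymbol{h_{ij}}\boldsymbol{h_{ij}^H}$ is rank one.

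The core of the rank argument then runs as follows. Complementary slackness gives $\boldsymbol{Y_{lj}}\boldsymbol{X_{lj}} = 0$, so the range of $\boldsymbol{X_{lj}}$ lies in the null space of $\boldsymbol{Y_{lj}}$, and $\text{Rank}(\boldsymbol{X_{lj}}) \le N_{A_j} - \text{Rank}(\boldsymbol{Y_{lj}})$. From the stationarity equation $\boldsymbol{Y_{lj}} = \mu \boldsymbol{I} - \sum_i c_{ilj}\boldsymbol{H_{ij}}$, I would argue that $\mu > 0$ (the power budget is active at optimality, which can be confirmed by noting that a larger transmit power can only relax C1 and C2 and hence improve $\lambda$), so $\mu\boldsymbol{I}$ is full rank, and subtracting the low-rank term $\sum_i c_{ilj}\boldsymbol{H_{ij}}$ can drop the rank of $\boldsymbol{Y_{lj}}$ by at most the rank of that subtracted term. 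The decisive observation — and this is where the MISO structure and the specific sign pattern of the coefficients matter — is that for a fixed $j$, only the desired-signal coefficient associated with user $i=l$ enters with a sign that can make $\boldsymbol{Y_{lj}}$ singular, so the subtracted term that can reduce the rank of $\boldsymbol{Y_{lj}}$ below full rank is effectively a single rank-one matrix $\boldsymbol{H_{lj}} = \boldsymbol{h_{lj}}\boldsymbol{h_{lj}^H}$. This forces $\text{Rank}(\boldsymbol{Y_{lj}}) \ge N_{A_j} - 1$, and combined with the null-space bound and the fact that $\boldsymbol{X_{lj}} \ne 0$ for an active user, yields $\text{Rank}(\boldsymbol{X_{lj}}) = 1$.

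The main obstacle I anticipate is making the sign/coefficient bookkeeping rigorous: I must verify that the interference terms (those with $l \ne i$ in C4) and the energy terms (C3) contribute to $\sum_i c_{ilj}\boldsymbol{H_{ij}}$ with coefficients of a sign that keep $\boldsymbol{Y_{lj}}$ positive semidefinite on a subspace of codimension one, so that at most one rank-one direction is annihilated. Equivalently, I would need to show the multiplier attached to the desired-signal contribution of the served user is strictly positive while the aggregate of the remaining contributions leaves $\boldsymbol{Y_{lj}}$ nonsingular on the orthogonal complement of $\boldsymbol{h_{lj}}$; a clean way to do this is to left- and right-multiply the stationarity equation by a vector orthogonal to $\boldsymbol{h_{lj}}$ and show the quadratic form is strictly positive, establishing $\boldsymbol{Y_{lj}}$ is positive definite on that complement. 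Once that positivity is secured, the rank conclusion is immediate, and I would close by noting this guarantees the relaxation in Proposition 1 recovers a feasible rank-one solution of the original problem \eqref{P3}.
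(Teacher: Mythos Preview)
Your proposal is correct and follows essentially the same KKT-based rank argument as the paper: form the Lagrangian, extract the stationarity condition for $\boldsymbol{X_{lj}}$, use complementary slackness $\boldsymbol{Y_{lj}}\boldsymbol{X_{lj}}=0$, and show the dual matrix has rank exactly $N_{A_j}-1$.

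One point of execution is worth flagging, since it is precisely the ``sign bookkeeping'' obstacle you anticipate. In the paper, the stationarity condition reads $\boldsymbol{\Gamma_{lj}} = \gamma^6\boldsymbol{I} - \sum_i(\gamma_i^4+\gamma_i^5)\boldsymbol{H_{ij}} + \gamma_l^5\boldsymbol{H_{lj}}$, and the paper groups terms as $\boldsymbol{\Gamma_{lj}} = \boldsymbol{A_j} + \gamma_l^5\boldsymbol{H_{lj}}$, where $\boldsymbol{A_j} := \gamma^6\boldsymbol{I} - \sum_i(\gamma_i^4+\gamma_i^5)\boldsymbol{H_{ij}}$ is \emph{independent of $l$}. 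Lemma~1 (via $\gamma_i^1>0$, hence $\gamma_i^3>0$, hence $\gamma_i^5<0$) guarantees the correction $\gamma_l^5\boldsymbol{H_{lj}}$ is negative semidefinite, so $\boldsymbol{A_j}\succeq 0$. Strict positive definiteness of $\boldsymbol{A_j}$ is then established by contradiction: if $\boldsymbol{A_j}\boldsymbol{a}=0$ for some $\boldsymbol{a}\neq 0$, then $\boldsymbol{a}^H\boldsymbol{\Gamma_{lj}}\boldsymbol{a} = \gamma_l^5|\boldsymbol{h_{lj}^H a}|^2 \ge 0$ with $\gamma_l^5<0$ forces $\boldsymbol{h_{lj}^H a}=0$ for \emph{every} $l$, which in turn makes $\boldsymbol{a}^H\boldsymbol{A_j}\boldsymbol{a} = \gamma^6\|\boldsymbol{a}\|^2 = 0$, contradicting $\gamma^6>0$. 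The $l$-independence of $\boldsymbol{A_j}$ is what makes this work. By contrast, your suggested tactic of testing $\boldsymbol{Y_{lj}}$ on the orthogonal complement of a single $\boldsymbol{h_{lj}}$ would leave you with $\mu\|\boldsymbol{v}\|^2 - \sum_{i\neq l}(\gamma_i^4+\gamma_i^5)|\boldsymbol{h_{ij}^H v}|^2$, and the individual signs of $(\gamma_i^4+\gamma_i^5)$ are not determined (since $\gamma_i^4>0$ but $\gamma_i^5<0$). So your projection idea, as stated, would not close the argument directly; you would still need the paper's trick of exploiting all $l$ simultaneously.
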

\begin{proof}
See Appendix B for the proof.
\end{proof}
As mentioned earlier, problem \eqref{P3} is a non-convex SDP. We define $\hat{\lambda}=\log(\lambda)$,
and use the monotonicity and concavity properties of the logarithm function to reformulate the problem as below:
\begin{equation} \label{PP} \tag{$P$}
  \begin{aligned}
    &\underset{\boldsymbol{X},\alpha_i, \beta_i,R_i,E_i\atop{I_i ,\hat{\lambda}, \  \forall{i}}}{\text{Maximize}}& & \hat{\lambda}\\
 & \text{\ \ \ \ \ subject to} 
  & & {(\overline{\mbox{C1}})}  \    \log({\alpha_i})+\log(R_i)\geq{\hat{\lambda}+\log(v_i^{(1)})}  \\
&
  & & {(\overline{\mbox{C2}})}  \     {\log(\beta_i)}+\log(\eta_iE_i)\geq{\hat{\lambda}+\log({v_i^{(2)}})}\\
  &
  & & \mbox{(C3)-(C4)}\\
    &
 & & (\overline{\mbox{C5}})  \   R_i\leq{\log(E_i+{\sigma_i^2})-\log(I_i+{\sigma_i^2})}\\
 &
 & & \mbox{(C6)-(C9)}. \\
    \end{aligned}
\end{equation}
Now the nonconvexity of problem \eqref{P3} is concentrated in inequality $(\overline{\mbox{C5}})$.
However, problem \eqref{PP} can be considered as a DC (difference of convex)
programming since $(\overline{\mbox{C5}})$ is the difference of two convex functions ($R_i-\log(E_i+{\sigma_i^2})$, $-\log(I_i+{\sigma_i^2})$). Therefore, it can be solved using local optimization method of convex-concave procedure (CCP) \cite{ccp}. CCP is a majorization-minimization \cite{MM} algorithm that solves DC programs as a sequence of convex programs by linearizing the concave part $\log(I_i)$ around the current iteration solution of $I_i$. 
To this end, we use the first order Taylor expansion and replace problem \eqref{PP} in the $k$th step by the following subproblem:
\begin{equation} \label{Pk} \tag{$P_k$}
  \begin{aligned}
    &\underset{\boldsymbol{X},\alpha_i, \beta_i,R_i,E_i\atop{I_i ,\hat{\lambda}, \  \forall{i}}}{\text{Maximize}}& & \hat{\lambda}\\
 & \text{\ \ \ \ \ subject to} 
  & & {(\overline{\mbox{C1}}),(\overline{\mbox{C2}}),\mbox{(C3)-(C4)}}\\
    &
 & & (\overline{\mbox{C5}})  \   R_i\leq\log(E_i+{\sigma_i^2})-\\
 &
 && (\log(I_i^k+{\sigma_i^2})+\frac{1}{{I_i^k}+{\sigma_i^2}}(I_i-{I_i^k}))\\
 &
 & & \mbox{(C6)-(C9)}. \\
    \end{aligned}
\end{equation}
This problem is a convex SDP and it can be solved by standard optimization techniques such as Interior-point Method. In this paper, we have used the CVX package to solve \eqref{Pk}.
The linearization point is updated with each iteration until it satisfies the termination criterion as described in Algorithm \ref{Alg1}. 
\begin{algorithm} [!t]
    \caption{CCP Algorithm for TS SWIPT with adaptive switching rates}
    \begin{spacing} {1} 
    \label{Alg1}
\begin{algorithmic} [1]
  \STATE \textbf{Step 0}:  Choose an initial point $I_i^0$  inside the convex set defined by (C1)-(C4), (C6)-(C9), $\gamma\in{\mathbb{R}}$ and a given tolerance $\epsilon>0$. Set $k:=0$.\\ 
\STATE  \textbf{Step 1}: For a given ${I}_i^k$, solve the convex SDP of \eqref{Pk} to obtain the solution $\hat{I}_i(I^k_i)$.\\ 
\STATE  \textbf{Step 2}:  If $\|\hat{I}_i(I^k_i)-{I}^k_i\|\leqslant{\epsilon}$  then stop. Otherwise  set ${I}^k_i=I^k_i+\gamma(\hat{{I}_i}({I}^k_i)-{I}^k_i)$.
\STATE  \textbf{Step 3}: increase $k$ by 1 and go back to step 1.
\end{algorithmic}
\end{spacing}
\end{algorithm}
We study the convergence of algorithm \ref{Alg1} by the following theorem.
\begin{prop} \label{prop2}
Algorithm \ref{Alg1} converges to a stationary point of problem \eqref{PP}.
\end{prop}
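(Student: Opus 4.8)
The plan is to recognize Algorithm~\ref{Alg1} as an instance of the majorization-minimization framework \cite{MM} in its convex-concave-procedure form \cite{ccp}, and then to invoke the standard convergence machinery for such schemes. The whole argument rests on two structural properties of the surrogate constraint used in \eqref{Pk}, which I would establish first. Writing $g_i(I_i)=\log(I_i+\sigma_i^2)$, I note that $g_i$ is concave, so its first-order expansion about the current iterate $I_i^k$, namely $\tilde{g}_i(I_i;I_i^k)=\log(I_i^k+\sigma_i^2)+\tfrac{1}{I_i^k+\sigma_i^2}(I_i-I_i^k)$, is a global over-estimator, $g_i(I_i)\le\tilde{g}_i(I_i;I_i^k)$, with both value and derivative matching $g_i$ at $I_i=I_i^k$. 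Consequently the constraint $(\overline{\mbox{C5}})$ appearing in \eqref{Pk} is a \emph{restriction} of the one in \eqref{PP}, so the feasible set of \eqref{Pk} is contained in that of \eqref{PP}, and the two touch tangentially at the linearization point. These are precisely the inner-approximation and tangency properties a valid majorizing surrogate must satisfy.

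From these two properties I would derive monotone ascent of the objective. For the undamped step ($\gamma=1$) the solution of \eqref{Pk} is feasible for \eqref{PP} by the inner-approximation property, and by tangency at the new linearization point it remains feasible for the next subproblem with the same objective value; hence $\hat{\lambda}^{k+1}\ge\hat{\lambda}^{k}$. Next I would show the sequence is bounded above: the power budget (C6), together with (C3)--(C4), bounds every $E_i$ and $I_i$, hence bounds each $R_i$ and therefore $\lambda$ and $\hat{\lambda}=\log(\lambda)$. A monotone, bounded sequence converges, so the objective values converge.

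To upgrade convergence of the objective to convergence to a stationary point, I would argue that the feasible region of \eqref{PP} is compact (closed by continuity of all constraint functions, bounded by (C6)), so the iterate sequence admits a convergent subsequence with some limit point. Writing the KKT system of the convex subproblem \eqref{Pk} and passing to the limit, I would use the derivative-matching property $\tilde{g}_i'(I_i^k;I_i^k)=g_i'(I_i^k)$ to show that at a fixed point $\hat{I}_i(I_i^k)=I_i^k$ the linearized constraint shares both value and gradient with $(\overline{\mbox{C5}})$, so that a KKT point of the subproblem is a KKT point of \eqref{PP}. This final step is the main obstacle: it requires a constraint qualification, so I would first verify that Slater's condition holds for the relaxed program (generic given a strictly feasible power allocation), and it requires care in passing the multipliers to the limit; the damping parameter $\gamma$ in Step~2 also mildly complicates the plain monotonicity argument, but is accommodated by the general CCP convergence theory of \cite{ccp}, which I would cite to close the multiplier-limit argument and conclude that the algorithm converges to a stationary point of \eqref{PP}.
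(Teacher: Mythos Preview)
Your proposal is correct and follows essentially the same route as the paper: verify that the surrogate in \eqref{Pk} is a valid inner approximation of $(\overline{\mbox{C5}})$ so that a fixed point of the iteration is a KKT point of \eqref{PP}, establish compactness of the feasible set via the power constraint (C6), and then invoke the standard CCP convergence theory. The only cosmetic difference is that the paper appeals to \cite{convg} (Zangwill's framework) rather than \cite{ccp}, and dispatches the compactness and stationarity checks more tersely, whereas you spell out the majorization/tangency properties and the monotone-ascent argument explicitly.
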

\begin{proof}
See Appendix C for the proof.
\end{proof}
\subsection{TS SWIPT with fixed switching rates}
In this scenario, we denote the fixed ID and EH switching rates for the $i$th UE by $\alpha_i$ and $\beta_i$  ($\alpha_i+\beta_i=1$). 
It is clear that the ideal SWIPT, mentioned earlier, is a special case of this scenario with $\alpha_i=\beta_i=1,  \forall{i}$. Using the same strategy as for TS SWIPT with adaptive switching rates, the relaxed SOO problem for this case is given below:
\begin{equation} \label{Q1} 
  \begin{aligned}
    &\underset{\boldsymbol{X},R_i,E_i\atop{I_i ,\lambda, \  \forall{i}}}{\text{Maximize}}
& & {\lambda}\\
 & \text{\ \ \ \ \ subject to} 
  & & \mbox{(C1)}  \    \alpha_iR_i\geq{\lambda{v_i^{(1)}}}, \ \forall{i}  \\
&
  & & \mbox{(C2)}  \     \eta_i\beta_iE_i\geq{\lambda{v_i^{(2)}}}, \ \forall{i}\\
&
  & & \mbox{(C3)}  \     E_i=\sum_{j=1}^{N_{AP}}\sum_{l=1}^{N_{UE}}\text{trace}(\boldsymbol{H_{ij}}\boldsymbol{X_{lj}}),  \ \forall{i}\\
  &
  & & \mbox{(C4)}  \     I_i=\sum_{j=1}^{N_{AP}}\sum_{l=1,l\neq{i}}^{N_{UE}}\text{trace}(\boldsymbol{H_{ij}}\boldsymbol{X_{lj}}),  \ \forall{i}\\
&
 & & \mbox{(C5)}  \     R_i={\log(E_i+{\sigma_i^2})-\log(I_i+{\sigma_i^2})}, \ \forall{i}\\
&
 & & \mbox{(C6)}  \    \sum_{j=1}^{N_{AP}}\sum_{l=1}^{N_{UE}}{\text{trace}({\boldsymbol{X_{lj}}})}\leq{P_{max}} \\
&
& & \mbox{(C7)}  \   \boldsymbol{X_{lj}}\succeq{0}, \ \forall{l,j}.
   \end{aligned}
\end{equation}
This problem is also a non-convex SDP because of the nonlinear equality in (C5). Before solving the problem, we study the rank of the optimal solution in the following proposition. 
\begin{prop} \label{prop3}
The optimal precoding matrices of problem \eqref{Q1} are rank-one matrices.
\end{prop}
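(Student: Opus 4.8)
The plan is to obtain Proposition~\ref{prop3} as essentially a corollary of Proposition~\ref{prop1}, since problem \eqref{Q1} is structurally almost identical to problem \eqref{P3}: the matrices $\boldsymbol{X_{lj}}$ enter both problems only through the linear maps $E_i$ and $I_i$ of (C3)--(C4), the power budget (C6), and the semidefiniteness constraint (C7), while the switching rates $\alpha_i,\beta_i$ are now fixed constants instead of optimization variables. First I would replicate the relaxation of Lemma~\ref{lem1} for this setting: because $\alpha_i\ge 0$ is fixed, constraint (C1) is monotonically relaxed as $R_i$ grows, so at any optimum $R_i$ is driven to its largest feasible value and the equality (C5) may be replaced by the inequality $(\overline{\mbox{C5}})\ R_i\le\log(E_i+\sigma_i^2)-\log(I_i+\sigma_i^2)$ without loss of optimality. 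Working with this relaxed problem is what makes the KKT system tractable, and any rank-one optimum of it will attain (C5) with equality and hence solve the original \eqref{Q1}.

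Next I would form the Lagrangian of the relaxed problem and write the stationarity condition with respect to each $\boldsymbol{X_{lj}}$. Associating a multiplier $\delta\ge 0$ with the power constraint (C6), a positive-semidefinite matrix $\boldsymbol{Z_{lj}}\succeq 0$ with (C7), and scalar multipliers with the couplings in (C3)--$(\overline{\mbox{C5}})$, the gradient condition can be arranged into the form
\begin{equation} \label{ZQ}
  \boldsymbol{Z_{lj}} = \delta\,\boldsymbol{I} + \boldsymbol{A_{lj}} - w_{lj}\,\boldsymbol{H_{lj}},
\end{equation}
where $\boldsymbol{A_{lj}}\succeq 0$ collects the nonnegatively weighted rank-one terms $\boldsymbol{H_{ij}}=\boldsymbol{h_{ij}}\boldsymbol{h_{ij}^H}$ arising from the energy and interference couplings of the other users, and $w_{lj}\ge 0$ is the multiplier attached to user $l$'s own rate/energy requirement (the term that enters $E_l$ and the desired-signal part of $R_l$). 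Because $\alpha_i,\beta_i$ are constants, they only rescale these multipliers and do not change their signs, so \eqref{ZQ} is exactly the condition obtained in Appendix~B for Proposition~\ref{prop1}.

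The crux is then a rank count of $\boldsymbol{Z_{lj}}$ combined with complementary slackness $\boldsymbol{Z_{lj}}\boldsymbol{X_{lj}}=\boldsymbol{0}$. I would first argue that $\delta>0$: scaling all beamformers up by a common factor increases every received-signal and harvested-energy term while leaving the noise $\sigma_i^2$ fixed, so at a Pareto-optimal point the power budget must be active. Then $\delta\boldsymbol{I}+\boldsymbol{A_{lj}}\succ 0$ is full rank, and subtracting the single rank-one matrix $w_{lj}\boldsymbol{H_{lj}}$ can lower the rank by at most one, giving $\text{Rank}(\boldsymbol{Z_{lj}})\ge N_{A_j}-1$. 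Complementary slackness forces $\boldsymbol{X_{lj}}$ into the null space of $\boldsymbol{Z_{lj}}$, hence $\text{Rank}(\boldsymbol{X_{lj}})\le 1$; discarding the trivial solution $\boldsymbol{X_{lj}}=\boldsymbol{0}$ yields exactly rank one.

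The main obstacle I anticipate is this final rank-count step: rigorously establishing that the stationarity condition really collapses to the single-negative-rank-one form \eqref{ZQ}, which requires tracking the signs of the multipliers on $E_i$ versus $I_i$ (these enter $(\overline{\mbox{C5}})$ with opposite signs) and verifying that the only definite-sign subtraction along $\boldsymbol{X_{lj}}$ is the own-channel term $\boldsymbol{H_{lj}}$. Once the sign bookkeeping confirms that $\boldsymbol{A_{lj}}\succeq 0$ and $w_{lj}\ge 0$, the conclusion is immediate. Since fixing $\alpha_i,\beta_i$ affects only the magnitudes and not the signs of these multipliers, the entire argument transfers verbatim from the proof of Proposition~\ref{prop1}, which is the cleanest way to present Proposition~\ref{prop3}.
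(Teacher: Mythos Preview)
Your plan to transplant the proof of Proposition~\ref{prop1} has a genuine gap in exactly the place you flag as ``the main obstacle''.  The rank argument in Appendix~B does not rest only on the signs of the multipliers attached to the $E_i$ and $I_i$ couplings; it rests on Lemma~\ref{lem1}, whose conclusion is that the rate multipliers satisfy $\gamma_i^1>0$ \emph{strictly for every} $i$.  That strict positivity is what forces $\gamma_i^3>0$ via (A1) and then $\gamma_i^5<0$ via (A10), and the strict negativity $\gamma_l^5<0$ for every $l$ is precisely what is used in the contradiction argument showing that $\boldsymbol{A_j}=\boldsymbol{\Gamma_{lj}}-\gamma_l^5\boldsymbol{H_{lj}}$ is \emph{strictly} positive definite.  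But the proof of Lemma~\ref{lem1} hinges on the stationarity conditions (A4)--(A5) with respect to $\alpha_i$ and $\beta_i$; once these are fixed parameters, those equations disappear from the KKT system, and there is no longer any mechanism forcing every $\gamma_i^1>0$.  Your own relaxation of (C5) to $(\overline{\mbox{C5}})$ only yields $\gamma_i^3\ge 0$, hence $\gamma_i^5\le 0$, and with non-strict signs the step ``$\delta\boldsymbol{I}+\boldsymbol{A_{lj}}\succ 0$'' breaks: if some $\gamma_l^5=0$ then $\boldsymbol{A_j}=\boldsymbol{\Gamma_{lj}}$ and the rank of $\boldsymbol{\Gamma_{lj}}$ can drop below $N_{A_j}-1$.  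Your alternative phrasing, that $\boldsymbol{A_{lj}}$ collects ``nonnegatively weighted'' cross terms $\boldsymbol{H_{ij}}$, is also unsupported: the coefficient of $\boldsymbol{H_{ij}}$ in $\boldsymbol{\Gamma_{lj}}$ is $-(\gamma_i^4+\gamma_i^5)=-(\gamma_i^2\eta_i\beta_i+\gamma_i^3/(E_i+\sigma_i^2)-\gamma_i^3/(I_i+\sigma_i^2))$, whose sign is indeterminate.  So the claim that ``fixing $\alpha_i,\beta_i$ affects only the magnitudes and not the signs'' is false; it removes two KKT equations that were essential for the sign control.

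The paper's own proof in Appendix~D does not try to salvage this route.  Instead it argues, using complementary slackness on (C1)--(C2) together with the fact that the switching rates are fixed, that generically at most one of the multipliers $\{\gamma_i^1,\gamma_i^2\}_i$ is nonzero.  With a single active index $i^*$ the stationarity condition collapses to $\boldsymbol{\Gamma_{lj}}=\gamma^6\boldsymbol{I_{N_{A_j}}}-b_l(\gamma^*)\boldsymbol{H_{i^*j}}$, a full-rank matrix minus one rank-one term, from which $\text{Rank}(\boldsymbol{\Gamma_{lj}})\ge N_{A_j}-1$ follows directly without any positive-definiteness argument on an intermediate $\boldsymbol{A_j}$.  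In short, the paper bypasses the step that fails in your approach rather than repairing it; the paper even remarks in Section~III.B that the equality (C5) ``can not be relaxed to inequality constraint in this problem'', signalling that Lemma~\ref{lem1} does not carry over.
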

\begin{proof}
See Appendix D for the proof.
\end{proof}
As mentioned, the main difficulty of problem \eqref{Q1} is concentrated in the nonlinear equality of  (C5). It can be inferred from proof of proposition \ref{prop3} that this equality constraint can not be relaxed to inequality constraint in this problem. This issue can be overcome by solving the problem iteratively and  linearizing (C5) around the current iteration point while maintaining the remaining convexity of the original problem. This method is called sequential convex programming (SCP) \cite{scp} which is an iterative local optimization method that generates a sequence of solutions to the convex subproblems. We use the first order Taylor expansion to write the linearized version of (C5) as follows:
 \begin{equation} \label{}
  \begin{aligned}
  (\overline{\mbox{C5}})  \   R_i&\simeq{R_i(E_i^0,I_i^0)+\nabla^T{R_i}(E_i^0,I_i^0)[E_i-E_i^0,I_i-I_i^0]}\\
  &=\log(\frac{\sigma_i^2+E_i^0}{\sigma_i^2+{I_i^0}})\\
&+\frac{1}{\sigma_i^2+{E_i^0}}(E_i-{E_i^0})-\frac{1}{\sigma_i^2+{I_i^0}}(I_i-{I_i^0}),
  \end{aligned}
\end{equation}
where $E_i^0$ and $I_i^0$ are the points around which the equation is linearized. 
Now we can replace problem \eqref{Q1} in the $k$th step by the following subproblem:
  \begin{equation} \label{Qk} \tag{$Q_k$}
  \begin{aligned}
   &\underset{\boldsymbol{X},R_i,E_i\atop{I_i ,\lambda, \  \forall{i}}}{\text{Maximize}}& & {\lambda}\\
 & \text{\ \ \ \ \ subject to} 
  & & \mbox{(C1)-(C4)}\\
    &
 & & (\overline{\mbox{C5}})  \   R_i\simeq{\log(\frac{\sigma_i^2+E_i^k}{\sigma_i^2+{I_i^k}})}\\
 &
 &&+\frac{1}{\sigma_i^2+{E_i^k}}(E_i-{E_i^k})-\frac{1}{\sigma_i^2+{I_i^k}}(I_i-{I_i^k}),\\
 &
 & & \mbox{(C6)-(C7)} \\
   \end{aligned}
\end{equation}
This problem is a convex SDP and it can be solved by standard optimization techniques similarly to subproblem \eqref{Pk}. The linearization point is updated with each iteration until it satisfies the termination criterion as described in Algorithm 2. 
\begin{algorithm} [!t]
    \caption{SCP Algorithm for TS SWIPT with fixed switching rates}
    \begin{spacing} {1} 
        \label{Alg2}
\begin{algorithmic} [1]
  \STATE \textbf{Step 0}:  Choose an initial point $\boldsymbol{w}_i^0=[E_i^0, I_i^0]$  inside the convex set defined by (C1)-(C4), (C6)-(C7), $\gamma\in{\mathbb{R}}$ and a given tolerance $\epsilon>0$. Set $k:=0$.\\ 
\STATE  \textbf{Step 1}: For a given $\boldsymbol{w}_i^k$, solve the convex SDP of \eqref{Qk} to obtain the solution $\hat{\boldsymbol{w}}_i(\boldsymbol{w}^k_i)=[\hat{E}_i(E^k_i), \hat{I}_i(I^k_i)]$.\\ 
\STATE  \textbf{Step 2}:  If $\|\hat{\boldsymbol{w}}_i(\boldsymbol{w}^k_i)-\boldsymbol{w}^k_i\|\leqslant{\epsilon}$  then stop. Otherwise  set $\boldsymbol{w}^k_i=\boldsymbol{w}^k_i+\gamma(\hat{\boldsymbol{w}_i}(\boldsymbol{w}^k_i)-\boldsymbol{w}^k_i)$.
\STATE  \textbf{Step 3}: increase $k$ by 1 and go back to step 1.
\end{algorithmic}
\end{spacing}
\end{algorithm}

The local convergence of Algorithm \ref{Alg2} to a stationary point of problem \eqref{Q1} is proven in \cite{dinh}, under mild assumptions, and the rate of
convergence is shown to be linear.
\section{Numerical Results}
In this section, we present numerical results to demonstrate the performance of the proposed multi-objective precoding and TS design algorithm in MISO SWIPT systems. We consider \textcolor{black}{the network setup} shown in Fig. \ref{fig2}, consisting $N_{AP}=2$ APs equipped with $N_{A_1}=N_{A_2}=2$ antennas  and $N_{UE}$ single antenna TS SWIPT sensors. 
Transmission channel gains, $\boldsymbol{h}_{ij}, \forall{i}\in{\mathcal{N}_{UE}}, j=1,2$, depend on the location of sensors with respect to APs and the channel fading model. Sensors in each \textcolor{black}{UEs are assumed to be distributed uniformly in the geographical area between two circles with radius $d_{min}=2$ m and $d_{max}=10$ m, respectively}. Distance of APs from each other is denoted by $D$ as shown in Fig. \ref{fig2} and is set to $D=20$ m unless it is stated clearly with a different value. The line-of-sight (LoS) component is dominant in these short distances, thus at each location channel gains are generated with Rician fading. The Rician factor, defined as the ratio of signal power in dominant component over the scattered power, is set to $K=3.5$ dB  and path-loss exponent of $3$ is considered.  Noise powers are assumed to be $\sigma_i^2=-90$ dBm $\forall{i}\in{\mathcal{N}_{UE}}$  and the maximum total power budget is set to $P_{max}=1$ watt.
\begin{figure}[!t]
\centering
\includegraphics[width=3in]{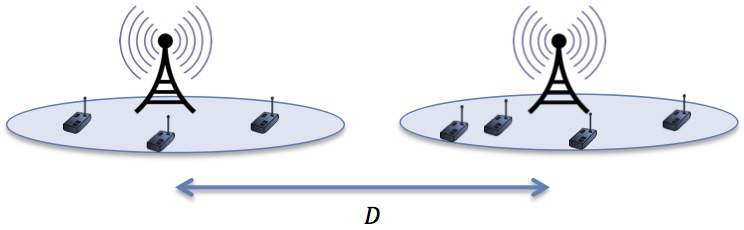}
 \caption{Simulation scheme}
\label{fig2}
\end{figure}
To illustrate the Pareto boundary for TS and ideal SWIPT systems, we solve the optimization problems using Algorithm \ref{Alg1} and \ref{Alg2} in several directions by changing the preference weights $v_i^{(1)},v_i^{(2)}, \forall{i}$. 
It should be mentioned that the case of having a number of data only receivers or energy only harvesters are particular cases of this setup with changing the preference parameters only and is not considered in these simulations.
\begin{figure}[t]%
\centering
\includegraphics[width=4.5in]{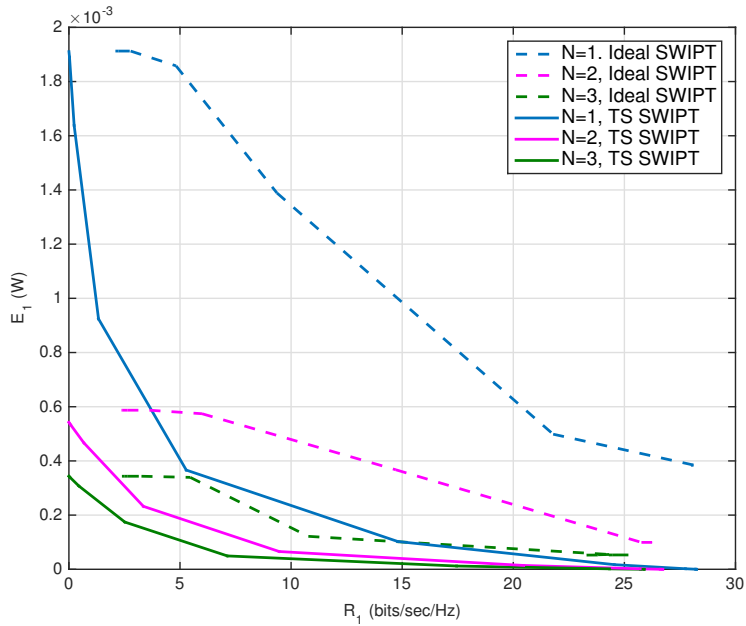}%
\caption{Pareto frontier of TS SWIPT and ideal SWIPT}
\label{fig3}
\end{figure}%

First we investigate the trade-off between harvested energy and information data rate in a symmetric setup which includes $N=\frac{N_{UE}}{2}=1,2,3$ UEs in each area with preference weights of $v_i^{(1)}=\theta_1, v_i^{(2)}=1 \ \forall{i}$. We search for the optimal solutions in  different directions by changing the value of $\theta_1$. Fig.\ref{fig3} shows the Pareto frontier of  the first TS SWIPT UE  for different number of UEs. 

It can be observed that the average harvested energy, $E_{h_1}^{TS}$, is a monotonically decreasing function of the achievable data rate $R_1^{TS}$. This result shows that these two objectives are generally conflicting and any resource allocation algorithm that maximizes the harvested energy cannot maximize the data rate. 
Also it can be seen that, increasing the number of UEs highly affects the possible harvestable amount of energy at each UE. This result is expectable, due to the fixed total power consumption assumption and the direct impact of transmit power on the received energy. \\
Pareto frontier of the infeasible ideal SWIPT, in which EH and ID are performed simultaneously is also shown in this figure as an upper bound.
It can be seen that for each $N$, the maximum harvested energy and 
the maximum achieved data rate of ideal SWIPT is equal to the maximum harvested energy and the maximum achieved data rate of TS SWIPT. However, ideal SWIPT can still have a nonzero minimum data rate and minimum harvested energy in these two extreme cases, because of the assumption of simultaneous EH and ID. Moreover, comparing the results of the ideal and TS SWIPT in different number of UEs reveals that the energy harvesting performance loss of TS with respect to the ideal SWIPT increases with decreasing the number of UEs.

The effect of multi-user interference on the trade-off of data rate and harvested energy is shown in Fig. \ref{fig6}. In Fig. \ref{fig4}, we have plotted the Pareto frontiers of the first UE for different AP distances of $D=5,10,15,20$ m. As can be seen, the maximum possible harvested energy increases significantly by decreasing the AP distances to $D=5,10$ m. However, the maximum data rate changes very slightly with decreasing $D$. To have a more detailed comparison, we have plotted the curves in logarithmic scale in Fig. \ref{fig5}. According to this figure, in low data rates, the system can benefit from the interference for energy harvesting by decreasing the distance between APs or equivalently by densifying the network more. 
\begin{figure}[H]%
\centering
\begin{subfigure}{1\columnwidth}
\centering
\includegraphics[width=4.5in]{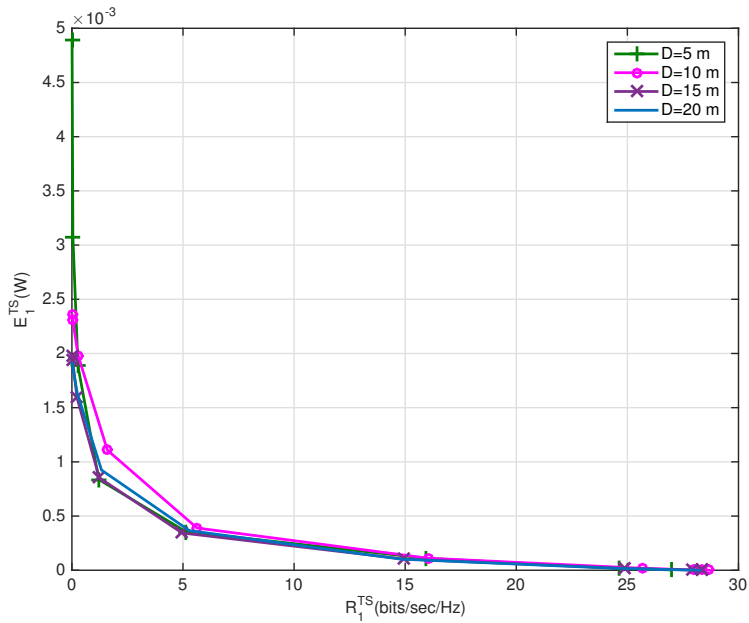}%
\caption{}%
\label{fig4}%
\end{subfigure}\hfill%
\begin{subfigure}{1\columnwidth}
\centering
\includegraphics[width=4.5in]{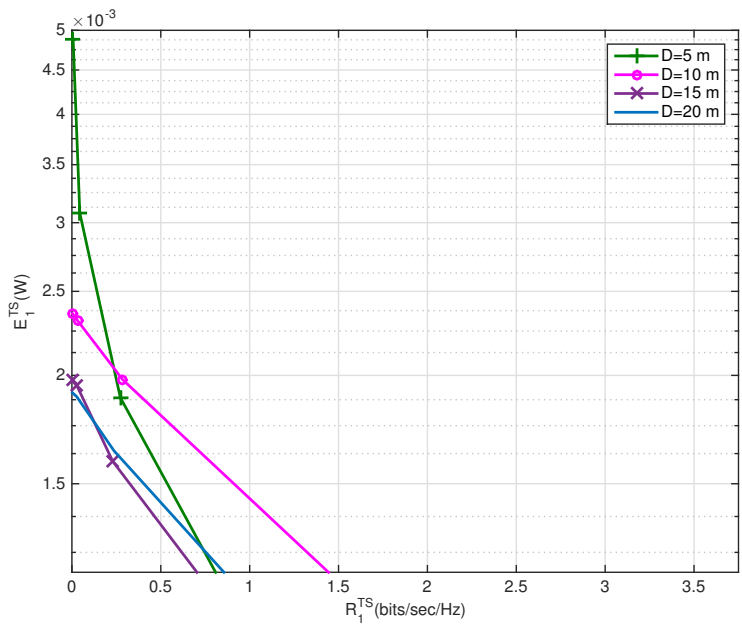}%
\caption{}%
\label{fig5}%
\end{subfigure}%
\caption{Effect of AP distance $D$ on Pareto frontier of the first TS SWIPT UEs}
\label{fig6}
\end{figure}
To study the effect of interference on the energy harvesting, we have also plotted in Fig. \ref{fig7} the ratio $r$, of the maximum amount of energy which is harvested in the first UE  from its nearest AP to the total amount of maximum harvested energy from both APs versus $D$ for two Rician parameters of $K=0$ (Rayleigh channel) and $K=3.5$ dB and $d_{max}=5,10,15$ m. As can be seen, by increasing the distance of APs in all the cases, $r$ goes to one which means that the UE harvests energy directly from its nearest AP. However, in closer AP distances, a percentage ($1-r$) of total amount of harvested energy is harvested from the other AP. As shown in Fig. \ref{fig7}, this percentage increases with increasing the maximum possible distance of UEs from APs, i.e. $d_{max}$. Moreover, the results show that with $d_{max}=5$, the performance will be nearly the same for both Rayleigh and Rician fading channels. However, in larger cells more energy will be harvested from the nearest AP in Rician channel due to the existence of LoS. 
\begin{figure}[h]%
\centering
\includegraphics[width=4.5in]{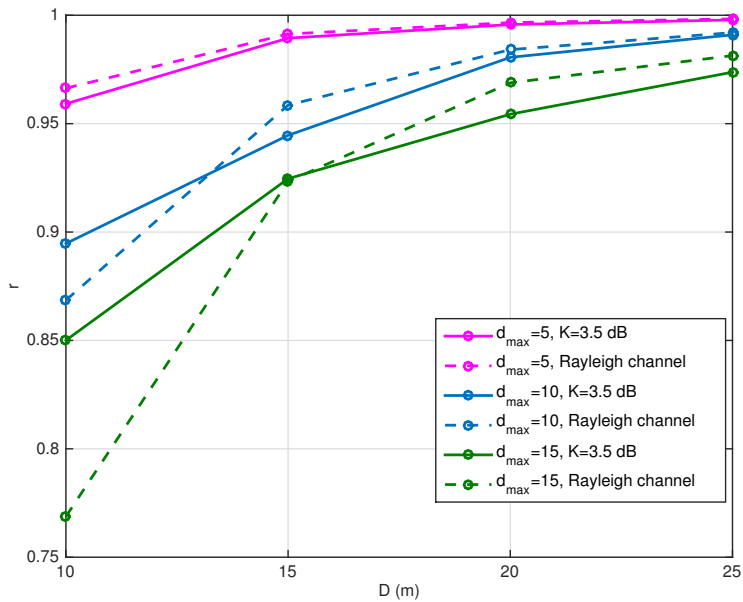}%
\caption{The ratio  of the maximum amount of energy which is harvested in the first UE from its nearest AP to the total amount of maximum harvested energy from both APs versus $D$}
\label{fig7}
\end{figure}%

To study the trade-off between different users, we choose different preference weights for $N_{UE}=2$ UEs by setting $v_1^{(1)}=\theta_1 \theta_2 , v_1^{(2)}=\theta_2$ and $v_2^{(1)}=\theta_1, v_2^{(2)}=1$. Fig. \ref{fig10} shows the Pareto frontiers of these two UEs for $\theta_2=1,5,10,15$.  As illustrated, for $\theta_2=1$ both UEs have the same Pareto frontiers. To benefit from better performance in the first UE, we increase the $\theta_2$. It can be inferred from Fig. \ref{fig10} that this superior performance is not achievable by only adapting the TS ratio. Consequently beamformers will be aligned toward the first UE by allocating more power to the first AP which results in increasing the $E_{h_1}^{TS}$ without increasing the interference on the first UE. Hence the maximum data rate and harvested energy both decrease in the second UE. Therefore, it is inferred that improving the performance of one user by increasing its preference weight is at the expense of destroying the performance of the other user drastically.

\begin{figure}[h]%
\centering
\begin{subfigure}{1\columnwidth}
\centering
\includegraphics[width=4.5in]{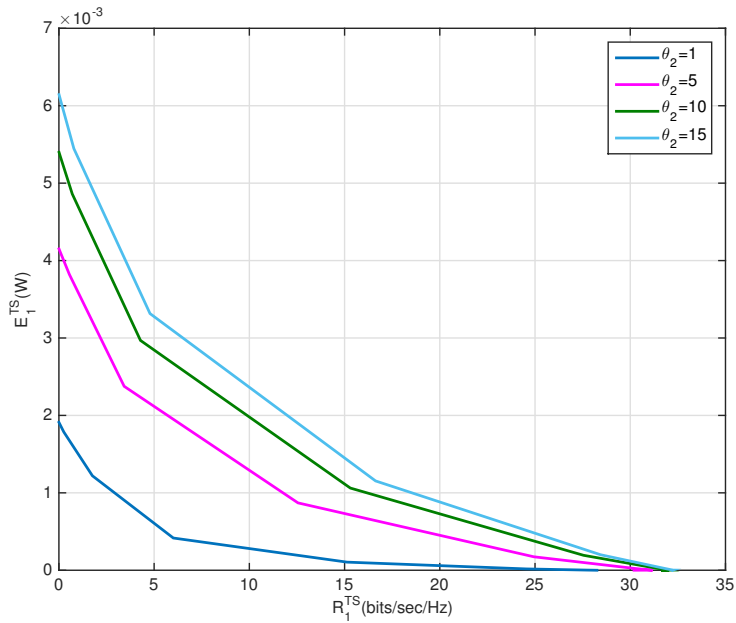}%
\caption{}%
\label{fig8}%
\end{subfigure}\hfill%
\begin{subfigure}{1\columnwidth}
\centering
\includegraphics[width=4.5in]{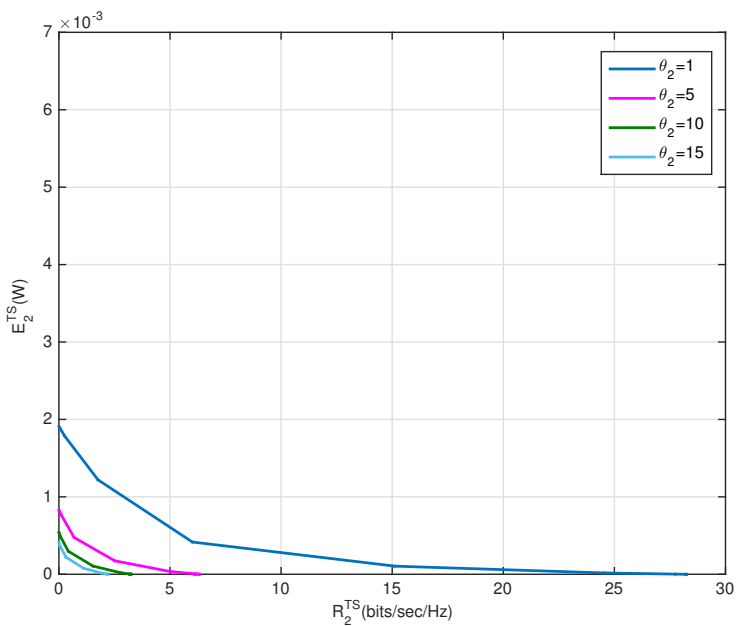}%
\caption{}%
\label{fig9}%
\end{subfigure}%
\caption{Pareto frontier of first (a) and second (b) TS SWIPT UEs with different priorities}
\label{fig10}
\end{figure}
 
In Fig. \ref{fig11}, we show the effect of optimizing the TS ratios on the energy harvested-data rate performance. In this figure, we have compared the Pareto frontier of the adaptive TS SWIPT with the Pareto frontier of TS SWIPT with fixed switching rates of $\alpha_i=0.1,0.3,0.5,0.7,0.9, \forall{i}$ for one realization of the channel.  As can be seen, the lower the $\alpha_1$s, the higher the maximum energy harvested and the lower the maximum data rate, while the optimal TS leverages the best possible harvested energy and data rate by optimizing $\alpha_i$ jointly with the transmit strategy.

 \begin{figure}[!h]%
\centering
\includegraphics[width=4.5in]{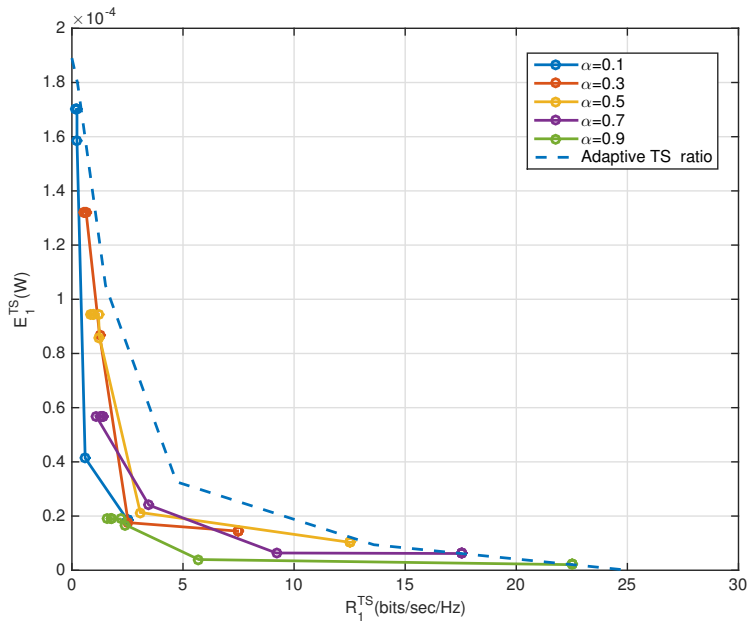}%
\caption{Pareto frontier of TS SWIPT with fixed and adaptive TS ratios}
\label{fig11}
\end{figure}%
\section{Conclusion}
In this paper, we studied the joint transmit precoding and receiver time switching design for downlink MISO SWIPT systems. The design problem was formulated as a non-convex MOO problem with the goal of maximizing the harvested energy and information data rates for all users simultaneously. The proposed MOO problem was scalarized employing the weighted Chebyshev method. This problem is a non-convex SDP which is relaxed and solved using convex-concave procedure which is a majorization-minimization algorithm. The trade-off between the harvested energy and the information data rate was studied by means of numerical results. The numerical results showed that in higher number of UEs the harvested energy loss to gain the desired data rate in TS SWIPT with respect to the ideal SWIPT is lower. 
Also it was shown that cooperation among multiple APs can be used to drastically increase the achievable trade-off of one UE but the effect on the trade-off of other UEs could be detrimental. Moreover, by studying the effect of multi-user interference on the performance of the system, it was inferred that in low-rate devices we can harvest more energy by densifying the network. 

\appendices
\section{Proof of Lemma \ref{lem1}}
The lemma can be proved by analysing the Karush-Kuhn-Tucker (KKT) \cite{boyd} 
conditions for the SDP problem \eqref{P3}. 
Lagrangian for this problem is defined as: \\
 \begin{equation} \label{L}
  \begin{aligned}
&\mathcal{L}(\gamma_i^1,\gamma_i^2,\gamma_i^3,\gamma_i^4,\gamma^5,\gamma^6,\gamma_i^7,\gamma_i^8,\gamma_i^9, \Gamma_{lj})=\\
&\lambda-\sum_{i=1}^{N_R}{\gamma_i^1(\lambda{v_i^{(1)}}-{\alpha_i}R_i)}-\\
&\sum_{i=1}^{N_{UE}}{\gamma_i^2(\lambda{v_i^{(2)}}-{\eta_i}{\beta_i}E_i)}-\\
&\sum_{i=1}^{N_{UE}}{\gamma_i^3(R_i-\log(E_i+{\sigma_i^2})+\log(I_i+{\sigma_i^2}))}-\\
&\sum_{i=1}^{N_{UE}}{\gamma_i^4(E_i-\sum_{j=1}^{N_{AP}}\sum_{l=1}^{N_{UE}}\text{trace}(\boldsymbol{H_{ij}}\boldsymbol{X_{lj}})})-\\
&\sum_{i=1}^{N_{UE}}{\gamma_i^5(I_i-\sum_{j=1}^{N_{AP}}\sum_{l=1,l\neq{i}}^{N_{UE}}\text{trace}(\boldsymbol{H_{ij}}\boldsymbol{X_{lj}}))}-\\
&{\gamma^6(\sum_{j=1}^{N_{AP}}\sum_{l=1}^{N_{UE}}{\text{trace}({\boldsymbol{X_{lj}}})}-{P_{max}}})-\\
&\sum_{l=1}^{N_{UE}}{\gamma_i^7(1- \alpha_i-\beta_i)}+\sum_{l=1}^{N_{UE}}{\gamma_i^8\alpha_i+\gamma_i^9\beta_i}+\\
&\sum_{j=1}^{N_{AP}}\sum_{l=1}^{N_{UE}}\text{trace}(\boldsymbol{\Gamma_{lj}}\boldsymbol{X_{lj}}),
  \end{aligned}
\end{equation}
where $\gamma_i^1,\gamma_i^2,\gamma^6,\gamma_i^8,\gamma_i^9\geq{0}$ and $\boldsymbol{\Gamma_{lj}}\succeq{0}$ are the Lagrangian multipliers. 
First we show that constraint (C6) always meet its boundary. 
Assume that the optimal  $\sum_{j=1}^{N_{AP}}\sum_{l=1}^{N_{UE}}{\text{trace}({\boldsymbol{X_{lj}}})}<{P_{max}}$, then we can multiply all optimal precoding matrices  $\boldsymbol{X_{lj}}$ by a factor of $\alpha>1$, for which according to \eqref{R} and \eqref{E} both $R_i$ and $E_i$ and therefore ${\lambda}$ would increase. Thus   $\sum_{j=1}^{N_{AP}}\sum_{l=1}^{N_{UE}}{\text{trace}({\boldsymbol{X_{lj}}})}={P_{max}}$ and as a result $\gamma^6>0$.
The useful KKT conditions for this problem can be written as:
  \begin{align}
  \label{A6}
&\mathop{\nabla}\limits_{\lambda}{\mathcal{L}}=1-\sum_{i=1}^{N_{UE}}(\gamma_i^1{v_i^{(1)}}+\gamma_i^2{v_i^{(2)}})=0\\ 
\label{A1}
&\mathop{\nabla}\limits_{R_i}{\mathcal{L}}={\gamma_i^1{\alpha_i}}-\gamma_i^3=0\\ 
\label{A9}
&\mathop{\nabla}\limits_{E_i}{\mathcal{L}}={\gamma_i^2\eta_i{\beta_i}}+\frac{\gamma_i^3}{E_i+{\sigma_i^2}}-\gamma_i^4=0\\
\label{A10}
&\mathop{\nabla}\limits_{I_i}{\mathcal{L}}=-\frac{\gamma_i^3}{I_i+{\sigma_i^2}}-\gamma_i^5=0\\
\label{A4}
&\mathop{\nabla}\limits_{\alpha_i}{\mathcal{L}}={\gamma_i^1}{R_i}+\gamma_i^7+\gamma_i^8=0\\
\label{A5}
&\mathop{\nabla}\limits_{\beta_i}{\mathcal{L}}={\gamma_i^2}{\eta_i}{E_i}+\gamma_i^7+\gamma_i^9=0\\
\label{A11}
&\mathop{\nabla}\limits_{\boldsymbol{X_{lj}}}{\mathcal{L}}={-\gamma^6}\boldsymbol{I_{N_{A_j}}}+\sum_{i=1}^{N_{UE}}{(\gamma_i^4+\gamma_i^5)\boldsymbol{H_{ij}}-{\gamma_l^5}\boldsymbol{H_{lj}}}+\boldsymbol{\Gamma_{lj}}=0\\
\label{A2}
&\gamma_i^8{\alpha_i}=0,\\ 
\label{A3}
&{\gamma_i^9}{\beta_i}=0,\\
\label{A12}
&\boldsymbol{\Gamma_{lj}}\boldsymbol{X_{lj}}=0,
 \end{align}

We show that $\gamma_i^1>{0}, \ \forall{i}$ and therefore according to \eqref{A1}, the Lagrangian multiplier related to (C5) is positive, i.e. $\gamma_i^3>{0}, \ \forall{i}$. This verifies that (C5) can be replace by $(\overline{\mbox{C5}})$ in \eqref{P3}. 
\\
Since the optimal  $\alpha_i\neq{0}$, slackness complementary condition \eqref{A2}  result in 
$\gamma_i^8=0$. Thus \eqref{A4} , \eqref{A5} yeild in:
\begin{equation} \label{eq}
  \begin{aligned}
  {\gamma_i^1}{R_i}={\gamma_i^2}{\eta_i}{E_i}+\gamma_i^9, \  \forall{i},
 \end{aligned}
\end{equation}
If $\gamma_i^1={0}$ then since $\gamma_i^2,\gamma_i^9\geq{0}$, for nonzero solution $E_i>0$, we have $\gamma_i^2=\gamma_i^9={0}$. Therefore $\gamma_i^1={0}$ leads to $\gamma_i^2={0}$ which considering \eqref{A6}, states that all ${\gamma_i^1}$ can not be zero at the same time. 
Suppose that there exist an $\hat{i}$ for which $\gamma_{\hat{i}}^1={0}$ and $\gamma_{i}^1>{0}, \forall{i\neq{\hat{i}}}$. Then according to \eqref{A1}, \eqref{A10} and \eqref{eq}, \eqref{A9} $\gamma_{\hat{i}}^2=\gamma_{\hat{i}}^3=\gamma_{\hat{i}}^4=\gamma_{\hat{i}}^5={0}$. 
According to \eqref{A11} the optimal dual matrix $\Gamma_{lj}$ satisfies the following equalities:
\begin{equation}\label{gam1}
  \begin{aligned}
\boldsymbol{\Gamma_{\hat{i}j}}={\gamma^6}\boldsymbol{I_{N_{A_j}}}-\sum_{i=1, i\neq{\hat{i}}}^{N_{UE}}(\gamma_i^4+\gamma_i^5)\boldsymbol{H_{ij}},
  \end{aligned}
\end{equation}
and
\begin{equation}\label{gam2}
  \begin{aligned}
\boldsymbol{\Gamma_{lj}}=&{\gamma^6}\boldsymbol{I_{N_{A_j}}}-\sum_{i=1, i\neq{\hat{i}}}^{N_{UE}}(\gamma_i^4+\gamma_i^5)\boldsymbol{H_{ij}}+{\gamma_l^5}\boldsymbol{H_{lj}}\\=&
\boldsymbol{\Gamma_{\hat{i}j}}+{\gamma_l^5}\boldsymbol{H_{lj}}, \ \ \ \forall{l}\neq{\hat{i}}.
  \end{aligned}
\end{equation}
We know that $\boldsymbol{\Gamma_{\hat{i}j}}$ is  positive definite. If the minimum eigenvalue of $\boldsymbol{\Gamma_{\hat{i}j}}$ is zero, there exists one vector $\boldsymbol{a}\neq{0}$ such that $\boldsymbol{a}^H{\boldsymbol{\Gamma_{\hat{i}j}}}\boldsymbol{a}=0$. Thus using \eqref{gam2} we have:
\begin{equation}
 \begin{aligned}
\boldsymbol{a}^H{\boldsymbol{\Gamma_{lj}}}\boldsymbol{a}={\gamma_l^5}\boldsymbol{a}^H{\boldsymbol{H_{lj}}}\boldsymbol{a}\geq{0}, \ \forall{l}\neq{\hat{i}},
  \end{aligned}
\end{equation}
because $\boldsymbol{\Gamma_{ij}}$ is positive definite. But  \eqref{A10} imply that $\gamma_l^5<{0}, \forall{l\neq{\hat{i}}}$ which results in:
\begin{equation}
 \begin{aligned}
\boldsymbol{h_{lj}^H}\boldsymbol{a}={0}, \forall{l}\neq{\hat{i}}.
  \end{aligned}
\end{equation}
According to \eqref{gam1} this leads to:
\begin{equation}
 \begin{aligned}
\boldsymbol{a}^H{\boldsymbol{\Gamma_{\hat{i}j}}}\boldsymbol{a}={\gamma^6}{\boldsymbol{a}^H}\boldsymbol{a}=0, \ \forall{j},
  \end{aligned}
\end{equation}
which is in contradiction with $\boldsymbol{a}\neq{0}$ and $\gamma^6>{0}$. Thus $\boldsymbol{\Gamma_{\hat{i}j}}$ is strictly positive definite and this together with \eqref{A12} result in $\boldsymbol{X_{\hat{i}j}}=0, \forall{j}$ which is not the optimal solution. Therefore  ${\gamma_{\hat{i}}^1\neq0}$, i.e., ${\gamma_{{i}}^1>0}, \forall{i}$ and the lemma will be proved. 
\section{Proof of proposition \ref{prop1}}
According to \eqref{A11} the optimal dual matrix $\Gamma_{ij}$ satisfies the following equality:
\begin{equation}\label{gam}
  \begin{aligned}
\boldsymbol{\Gamma_{lj}}={\gamma^6}\boldsymbol{I_{N_{A_j}}}-\sum_{i=1}^{N_{UE}}(\gamma_i^4+\gamma_i^5)\boldsymbol{H_{ij}}+{\gamma_l^5}\boldsymbol{H_{lj}}.
  \end{aligned}
\end{equation}
We know that $\boldsymbol{\Gamma}_{lj}\succeq{0}$ and $\boldsymbol{H_{lj}}=\boldsymbol{h_{lj}}\boldsymbol{h_{lj}^H}$
is a rank-one positive semidefinite matrix. Also according to Lemma \eqref{lem1} $\gamma_i^1>{0}, \forall{i}$ and therefore \eqref{A1}, \eqref{A10} imply that $\gamma_i^5<{0}, \forall{i}$. So we have:
\begin{equation}\label{PD}
 \begin{aligned}
\boldsymbol{A_{j}}=&{\gamma^6}{I_{N_{A_j}}}-\sum_{i=1}^{N_{UE}}(\gamma_i^4+\gamma_i^5)\boldsymbol{H_{ij}}\\
=&\boldsymbol{\Gamma_{lj}}-{\gamma_l^5}\boldsymbol{H_{lj}}\succeq{0}, \ \forall{l,j}.
  \end{aligned}
\end{equation}
Using the same method as in proof of lemma 1, it can be shown that $\boldsymbol{A_{j}}$ is strictly positive definite
and $\text{Rank}(\boldsymbol{A_{j}})=N_{A_j}$. As a result it follows from \eqref{gam} that:
\begin{equation}
  \begin{aligned}
\text{Rank}(\boldsymbol{\Gamma_{lj}})&\geq{\text{Rank}(\boldsymbol{A_{j}})-\text{Rank}(-{\gamma_l^5}\boldsymbol{H_{lj}})}\\
& \geq{N_{A_j}-1}, \forall{l,j}.
  \end{aligned}
\end{equation}
On the other side, according to \eqref{A12}  if $\text{Rank}(\boldsymbol{\Gamma_{lj}})=N_{A_j}, \forall{l,j}$  we have $\boldsymbol{X_{lj}}=0, \forall{l,j}$ which is not the optimal solution. Therefore $\text{Rank}(\boldsymbol{\Gamma_{lj}})=N_{A_j}-1, \forall{l,j}$ and rank-null theorem leads to $\text{Rank}(\boldsymbol{X_{lj}})=1, \forall{l,j}$.
\section{Proof of proposition \ref{prop2}}
It can be easily shown that if $I^k$ is the stationary point of subproblem \eqref{Pk}, i.e. fulfilling the KKT conditions of subproblem \eqref{Pk}, it is also a stationary point of problem \eqref{PP}. 
Convergence of constrained CCP  is studied in \cite{convg} using Zangwill's global convergence theory of iterative algorithms. Consider the DC program of:
\begin{equation} \label{}
  \begin{aligned}
    &\underset{x\in{\mathcal{R}^n}}{\text{Minimize}}
& &u_0(x)\\
 & \text{subject to} 
  & & \mbox{}  \  u_i(x)-v_i(x)\leq{0}  \\
  \end{aligned}
\end{equation}
According to \cite{convg} the following conditions are sufficient for convergence of CCP algorithm for any chosen initial point:
\begin{itemize}
\item
$u_i$ and $v_i$ are real-valued differentiable convex functions on $\mathcal{R}^n$.
\item
$\Omega:=\{x : u_i(x)-v_i(x)\leq{0}, i\in\}$ is compact and therefore the point-to-set map related to CCP algorithm is uniformly compact.
\end{itemize}
The first condition is explicitly satisfied for problem \eqref{PP}. Thus we only need to show the compactness of the feasible region. The set is compact if and only if it is closed and bounded. $\Omega$ is closed since all $u_i$s and $v_i$s in problem \eqref{PP} are lower semi-continuous functions and thus as shown in \cite{close} their level sets defined by $L_{\gamma}=\{x\in{\mathcal{R}^n}: u_i(x)-v_i(x)<\gamma\}$ will be closed. Also since $\boldsymbol{X_{lj}}, \forall{l,j}$ are positive definite and their trace is bounded with power constraint it can be inferred that $R_i,E_i,I_i, \forall{i}$ and therefore $\Omega$ is bounded. Thus the second condition is also satisfied in problem \eqref{PP} and the convergence of the algorithm will be proved. 

\section{Proof of proposition \ref{prop3}}
The proposition can be proved by analysing the KKT conditions for the SDP problem \eqref{Q1}. 
Lagrangian for this problem is defined as:
 \begin{equation} \label{L}
  \begin{aligned}
&\mathcal{L}(\gamma_i^1,\gamma_i^2,\gamma_i^3,\gamma_i^4,\gamma^5,\gamma^6,\gamma_i^7,\gamma_i^8,\gamma_i^9, \Gamma_{lj})=\\
&\lambda-\sum_{i=1}^{N_R}{\gamma_i^1(\lambda{v_i^{(1)}}-\alpha_iR_i)}-\sum_{i=1}^{N_{UE}}{\gamma_i^2(\lambda{v_i^{(2)}}-{\eta_i}\beta_iE_i)}-\\
&\sum_{i=1}^{N_{UE}}{\gamma_i^3(R_i-\log(E_i+{\sigma_i^2})+\log(I_i+{\sigma_i^2}))}-\\
&\sum_{i=1}^{N_{UE}}{\gamma_i^4(E_i-\sum_{j=1}^{N_{AP}}\sum_{l=1}^{N_{UE}}\text{trace}(\boldsymbol{H_{ij}}\boldsymbol{X_{lj}})})-\\
&\sum_{i=1}^{N_{UE}}{\gamma_i^5(I_i-\sum_{j=1}^{N_{AP}}\sum_{l=1,l\neq{i}}^{N_{UE}}\text{trace}(\boldsymbol{H_{ij}}\boldsymbol{X_{lj}}))}-\\
&{\gamma^6(\sum_{j=1}^{N_{AP}}\sum_{l=1}^{N_{UE}}{\text{trace}({\boldsymbol{X_{lj}}})}-{P_{max}}})-\\
&\sum_{j=1}^{N_{AP}}\sum_{l=1}^{N_{UE}}\text{trace}(\boldsymbol{\Gamma_{lj}}\boldsymbol{X_{lj}}),
  \end{aligned}
\end{equation}
where $\gamma_i^1,\gamma_i^2,\gamma^6\geq{0}$ and $\boldsymbol{\Gamma_{lj}}\succeq{0}$ are the Lagrangian multipliers. 
The useful KKT conditions for this problem can be written as:
  \begin{align}
  \label{B1}
&\mathop{\nabla}\limits_{\lambda}{\mathcal{L}}=1-\sum_{i=1}^{N_{UE}}(\gamma_i^1{v_i^{(1)}}+\gamma_i^2{v_i^{(2)}})=0\\ 
\label{B2}
&\gamma_i^1(\lambda{v_i^{(1)}}-\alpha_iR_i)=0\\ 
\label{B3}
&\gamma_i^2(\lambda{v_i^{(1)}}-{\eta_i}\beta_iE_i)=0\\ 
\label{B4}
&\mathop{\nabla}\limits_{R_i}{\mathcal{L}v_1^{(2)}}={{\alpha_i}\gamma_i^1}-\gamma_i^3=0\\ 
\label{B5}
&\mathop{\nabla}\limits_{E_i}{\mathcal{L}}={\gamma_i^2}{\eta_i}{\beta_i}+\frac{\gamma_i^3}{E_i+{\sigma_i^2}}-\gamma_i^4=0\\
\label{B6}
&\mathop{\nabla}\limits_{I_i}{\mathcal{L}}=-\frac{\gamma_i^3}{I_i+{\sigma_i^2}}-\gamma_i^5=0\\
\label{B7}
&\boldsymbol{\Gamma_{lj}}\boldsymbol{X_{lj}}=0,\\
\label{B8}
&\mathop{\nabla}\limits_{\boldsymbol{X_{lj}}}{\mathcal{L}}={-\gamma^6}\boldsymbol{I_{N_{A_j}}}+\sum_{i=1}^{N_{UE}}{(\gamma_i^4+\gamma_i^5)\boldsymbol{H_{ij}}-{\gamma_l^5}\boldsymbol{H_{lj}}}+\boldsymbol{\Gamma_{lj}}=0
 \end{align}
Form the slack conditions \eqref{B2}, \eqref{B3} it can be inferred that only one of the $\gamma_i^1, \gamma_i^2, \ \forall{i}$ can be positive. Since if there exists $\hat{i}$ for which we have 
$\gamma_{\hat{i}}^1, \gamma_{\hat{i}}^2>0$, \eqref{B2}, \eqref{B3} result in:
\begin{equation} \label{}
  \begin{aligned}
 \lambda=\frac{\alpha_{\hat{i}}R_{\hat{i}}}{v_{\hat{i}}^1}=\frac{\eta_{\hat{i}}\beta_{\hat{i}}E_{\hat{i}}}{v_{\hat{i}}^2}
 \end{aligned}
\end{equation}
which is not possible in general for fixed values of $v_i^{(2)},v_i^{(2)}$ and $\alpha_i,\beta_i$. Therefore the optimal solution occurs when we have:
\begin{equation} \label{}
  \begin{aligned}
 \lambda=\min_i(\frac{\alpha_{i}R_{{i}}}{v_{{i}}^1},\frac{\eta_i\beta_iE_{{i}}}{v_{{i}}^2})
 \end{aligned}
\end{equation}
which results in:
\begin{equation} 
  \begin{aligned}
\gamma^*= \left\{
  \begin{array}{lr}
    \gamma_{i^*}^1  & \text{if} \ \   \lambda=\frac{\alpha_{i^*}R_{{i}^*}}{v_{{i}^*}^1}\\
   \gamma_{i^*}^2 & \text{if}  \ \  \lambda=\frac{\eta_{i^*}\beta_{i^*}E_{{i}^*}}{v_{{i}^*}^2}
  \end{array}
\right.
  \end{aligned}
\end{equation}
So we have $\gamma_i^*>{0}$ and the other $\gamma_i^1=\gamma_i^2=0, \forall{i}\neq{i^*}$. Now according to \eqref{B4},\eqref{B5},\eqref{B6} we will have 
$\sum_{i=1}^{N_{UE}}{{(\gamma_i^4+\gamma_i^5)\boldsymbol{H_{ij}}}}=a(\gamma^*)\boldsymbol{H_{i^*j}}$ for  the optimal solution in which 
\begin{equation} \label{a}
  \begin{aligned}
a(\gamma^*)= \left\{
  \begin{array}{lr}
   \gamma_{i^*}^1\alpha_{i^*}(\frac{1}{E_{i^*}+{\sigma_{i^*}^2}}-\frac{1}{I_{i^*}+{\sigma_{i^*}^2}}) & \text{if} \ \  \gamma^*=\gamma_{i^*}^1\\
   \gamma_{i^*}^2\eta_{i^*}\beta_{i^*} & \text{if}  \ \  \gamma^*=\gamma_{i^*}^2
  \end{array}
\right.
  \end{aligned}
\end{equation}
Now using \eqref{B8} we have:
\begin{equation}
  \begin{aligned}
\boldsymbol{\Gamma_{lj}}&={\gamma^6}\boldsymbol{I_{N_{A_j}}}-\sum_{i=1}^{N_{UE}}(\gamma_i^4+\gamma_i^5)\boldsymbol{H_{ij}}+{\gamma_l^5}\boldsymbol{H_{lj}}
\\&=
{\gamma^6}\boldsymbol{I_{N_{A_j}}}-a(\gamma^*)\boldsymbol{H_{i^*j}}+{\gamma_l^5}\boldsymbol{H_{lj}}\\
&= {\gamma^6}\boldsymbol{I_{N_{A_j}}}-b_l(\gamma^*)\boldsymbol{H_{i^*j}}
  \end{aligned}
\end{equation}
in which
\begin{equation}
  \begin{aligned}
  b_l(\gamma^*)=
\left\{\begin{array}{lr}
  a(\gamma^*)-{\gamma_l^5}  & l=i^*\\
 a(\gamma^*) & \forall l\neq{i^*}
  \end{array}\right..
  \end{aligned}
\end{equation}
 As a consequence we get:
\begin{equation}
  \begin{aligned}
\text{Rank}(\boldsymbol{\Gamma_{lj}})&\geq{\text{Rank}({\gamma^6}\boldsymbol{I_{N_{A_j}}})-\text{Rank}(b_l(\gamma^*)\boldsymbol{H_{i^*j}})
}\\
& \geq{N_{A_j}-1}, \forall{l,j}.
  \end{aligned}
\end{equation}
On the other side, according to \eqref{B7}  if $\text{Rank}(\boldsymbol{\Gamma_{lj}})=N_{A_j}, \forall{l,j}$  we have $\boldsymbol{X_{lj}}=0, \forall{l,j}$ which is not the optimal solution. Therefore $\text{Rank}(\boldsymbol{\Gamma_{lj}})=N_{A_j}-1, \forall{l,j}$ and rank-null theorem leads to $\text{Rank}(\boldsymbol{X_{lj}})=1, \forall{l,j}$.


\section*{Acknowledgment}
The authors would like to thank IAP BESTCOM project funded by BELSPO, and the FNRS for the financial support.

\ifCLASSOPTIONcaptionsoff
  \newpage
\fi




\begin{thebibliography}{1}

\bibitem{bi}
S. Bi, C. Ho, and R. Zhang, ``Wireless powered communication: opportunities and
  challenges,'' \emph{IEEE Communications Magazine}, vol.~53, no.~4, pp.
  117--125, 2015.

\bibitem{varsh}
L.~R. Varshney, ``Transporting information and energy simultaneously,'' in
  \emph{IEEE International Symposium on Information Theory}, 2008, pp.
  1612--1616.

\bibitem{zhang}
R.~Zhang and C.~K. Ho, ``MIMO broadcasting for simultaneous wireless
  information and power transfer,'' \emph{IEEE Transactions on Wireless
  Communications}, vol.~12, no.~5, pp. 1989--2001, 2013.

\bibitem{liu1}
L.~Liu, R.~Zhang, and K. C. Chua, ``Wireless information transfer with
  opportunistic energy harvesting,'' \emph{IEEE Transactions on Wireless
  Communications}, vol.~12, no.~1, pp. 288--300, 2013.

\bibitem{liu2}
------, ``Wireless information and power transfer: a dynamic power splitting
  approach,'' \emph{IEEE Transactions on Communications}, vol.~61, no.~9, pp.
  3990--4001, 2013.
  
  \bibitem{zho}
  X. Zhou, "Training-based SWIPT: Optimal power splitting at the receiver."  \emph{IEEE Transactions on Vehicular Technology}, vol.~64, no.~9, pp.
  4377--4382, 2015.

\bibitem{ng}
D.~W.~K. Ng, E.~S. Lo, and R.~Schober, ``Wireless information and power
  transfer: Energy efficiency optimization in OFDMA systems,'' \emph{IEEE
  Transactions on Wireless Communications}, vol.~12, no.~12, pp. 6352--6370,
  2013.

\bibitem{shi}
Q.~Shi, L.~Liu, W.~Xu, and R.~Zhang, ``Joint transmit beamforming and receive
  power splitting for MISO SWIPT systems,'' \emph{IEEE Transactions on Wireless
  Communications}, vol.~13, no.~6, pp. 3269--3280, 2014.

\bibitem{dong}
Y.~Dong, M.~Hossain, and J.~Cheng, ``Joint power control and time switching for
  SWIPT systems with heterogeneous QoS requirements,'' \emph{IEEE
  Communications Letters}, vol.~20, no.~2, pp. 328 -- 331, 2015.
  
  \bibitem{Sheng}
M. Sheng, L. Wang, X. Wang, Y. Zhang, C. Xu and J. Li, ``Energy efficient beamforming in MISO heterogeneous cellular networks with wireless information and power transfer," \emph{IEEE Journal on Selected Areas in Communications}, vol. 34, no. 4, pp.954-968, 2016.

\bibitem{park}
J.~Park and B.~Clerckx, ``Joint wireless information and energy transfer in a
  two-user MIMO interference channel,'' \emph{IEEE Transactions on Wireless
  Communications}, vol.~12, no.~8, pp. 4210--4221, 2013.


\bibitem{zong}
Z. Zong, H. Feng, F. R. Yu, N. Zhao, T. Yang and B. Hu, ``Optimal Transceiver Design for SWIPT in $k$-User MIMO Interference Channels,'' \emph{IEEE Transactions on Wireless Communications}, vol. 15, no. 1, pp.430-445, 2016. 


\bibitem{ng2}
D.W.K. Ng, and R. Schober,  ``Resource allocation for coordinated multipoint networks with wireless information and power transfer," in \emph{IEEE Global Communications Conference}, Dec. 2014, pp. 4281-4287.

  \bibitem{Nur}
W.N.S.F.W. Ariffin, X. Zhang and M.R. Nakhai, ``Sparse Beamforming for Real-time Energy Trading in CoMP-SWIPT Networks,'' \emph{IEEE International Conference on Communications (ICC)}, Kuala Lumpur, Malaysia,  22-27 May 2016.
  
  \bibitem{bjornson}
E.~Bjornson, E.~A. Jorswieck, M.~Debbah, and B.~Ottersten, ``Multiobjective
  signal processing optimization: The way to balance conflicting metrics in 5g
  systems,'' \emph{IEEE Signal Processing Magazine}, vol.~31, no.~6, pp.
  14--23, 2014.
  
  \bibitem{moo}
S.~Leng, D.~W.~K. Ng, N.~Zlatanov, and R.~Schober, ``Multi-objective
  beamforming for energy-efficient SWIPT systems,'' \emph{IEEE International Conference on Communications (ICC)}, Kuala Lumpur, Malaysia, 22-27 May 2016.
  
  \bibitem{moo2}
S.~Leng, D.~W.~K. Ng, N.~Zlatanov, and R.~Schober, ``Multi-objective resource allocation in full-duplex SWIPT systems,'' \emph{arXiv preprint arXiv:1509.05959}, 2015.
  
\bibitem{pareto}
P.~M. Pardalos, A.~Migdalas, and L.~Pitsoulis, \emph{Pareto optimality, game
  theory and equilibria}, Springer Science \& Business Media, vol.~17, 2008.
  
  \bibitem{MM}
  D.R. Hunter and K. Lange,  A tutorial on MM algorithms,  \emph{The American Statistician}, vol. 58, no. 1, pp. 30-37, 2004.	

  \bibitem{ccp}
T. Lipp and S. Boyd,  ``Variations and extension of the convex?concave procedure,"  \emph{Optimization and Engineering}, pp.1-25.2014. 


\bibitem{scp}
Q. Tran-Dinh and M. Diehl,  ``Sequential Convex Programming Methods for Solving Nonlinear Optimization Problems with DC constraints," arXiv preprint arXiv:1107.5841, 2011.

\bibitem{dinh}
Q.~T. Dinh and M.~Diehl, ``Local convergence of sequential convex programming
  for nonconvex optimization,'' in \emph{Recent Advances in Optimization and
  its Applications in Engineering}, Springer, pp. 93--102,  2010.

\bibitem{boyd}
S.~Boyd and L.~Vandenberghe, \emph{Convex optimization}, Cambridge university press, 2004.

  \bibitem{convg}
G. R. Lanckriet and B. K. Sriperumbudur, ``On the convergence of the concave-convex procedure,'' in \emph{Advances in neural information processing systems}, pp. 1759--1767, 2009.
  
\bibitem{close}
J. Peypouquet, \emph{Convex optimization in normed spaces: theory, methods and examples}, Springer, 2015.


 
\end{thebibliography}
%

\end{document}